\let\csname equation*\endcsname\relax
\let\csname endequation*\endcsname\relax
\newtheorem{theorem}{Theorem}[section]
\newtheorem{corollary}{Corollary}[section]
\def\keywords#1{\vspace*{5pt}\address{Keywords: #1}}
\begin{document}
\suppressfloats
\title{Nonautonomous ultradiscrete hungry Toda lattice and a generalized box--ball system}
\author{Kazuki Maeda}
\address{Department of Mathematical Sciences, School of Science and Technology, Kwansei Gakuin University, 2-1 Gakuen, Sanda 669-1337, Japan}
\ead{kmaeda@kmaeda.net}
\begin{abstract}
  A nonautonomous version of the ultradiscrete hungry Toda lattice
  with a finite lattice boundary condition is derived by applying
  reduction and ultradiscretization to a nonautonomous
  two-dimensional discrete Toda lattice.
  It is shown that the derived ultradiscrete system has a direct connection to
  the box--ball system with many kinds of balls and finite carrier capacity.
  Particular solutions to the ultradiscrete system are constructed
  by using the theory of some sort of discrete biorthogonal polynomials.
\end{abstract}

\keywords{integrable systems, biorthogonal polynomials, discrete two-dimensional Toda lattice}
\pacs{02.30.Ik, 05.45.Yv}

\section{Introduction}

The box--ball system (BBS) is a soliton cellular automaton
composed of an infinite array of boxes and a finite number of balls~\cite{takahashi1990sca},
known as one of the most important ultradiscrete integrable systems.
The time evolution equation of the original BBS
\begin{equation}\label{eq:orig-BBS}
  U^{(t+1)}_n=\min\left(1-U^{(t)}_n, \sum_{j=-\infty}^{n-1} (U^{(t)}_j-U^{(t+1)}_j)\right),
\end{equation}
where $U^{(t)}_n \in \{0, 1\}$ denotes the number of balls in the $n$th box at time $t$,
is derived from the discrete
KdV lattice through ultradiscretization~\cite{tokihiro1996fse,tsujimoto1998uke}.
\begin{figure}
  \centering
  \includegraphics{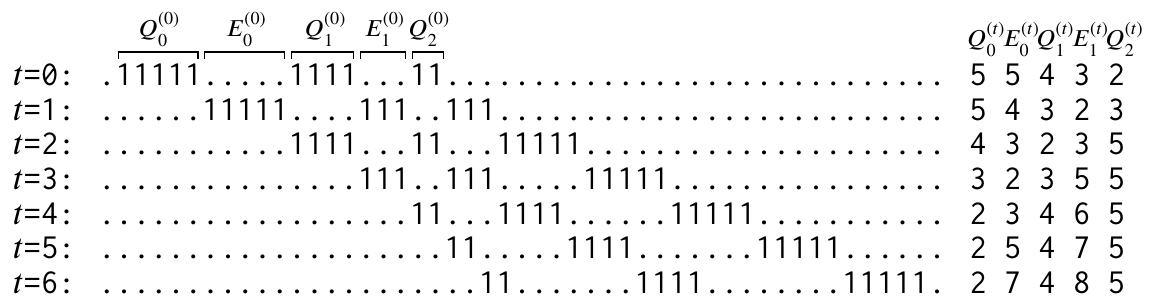}
  \caption{An example of the time evolution of (left) the original BBS~\eqref{eq:orig-BBS}
    and (right) the ultradiscrete Toda lattice with the finite lattice boundary
    condition~\eqref{eq:u-Toda}.}
  \label{fig:bbs}
\end{figure}
The left side of \fref{fig:bbs} shows an example of the time evolution of
the original BBS~\eqref{eq:orig-BBS}, in which `\texttt{1}' and `\texttt{.}' denotes
a ball and an empty box, respectively.
We can observe that three blocks of balls move from left to right
and interact with each other like solitons.

There is another time evolution equation for the original BBS:
\begin{subequations}\label{eq:u-Toda}
  \begin{gather}
    Q^{(t+1)}_n=\min\left(E^{(t)}_n, \sum_{j=0}^{n}Q^{(t)}_j-\sum_{j=0}^{n-1} Q^{(t+1)}_j\right),\\
    E^{(t+1)}_n=E^{(t)}_n-Q^{(t+1)}_n+Q^{(t)}_{n+1}
  \end{gather}
  for $n=0, 1, \dots, N-1$ with the finite lattice boundary condition
  \begin{equation}
    E^{(t)}_{-1}=E^{(t)}_{N-1}=+\infty
  \end{equation}
  for all $t \in \mathbb Z$, where
\end{subequations}
\begin{itemize}
\item $Q^{(t)}_n$: the number of balls in the $n$th block at time $t$;
\item $E^{(t)}_n$: the number of empty boxes between the $n$th and $(n+1)$st blocks of balls at time $t$;
\item $N$: the number of the blocks of balls.
\end{itemize}
It is known that equations~\eqref{eq:u-Toda} are derived from
the discrete Toda lattice
\begin{gather*}
  q^{(t+1)}_n+e^{(t+1)}_{n-1}=q^{(t)}_n+e^{(t)}_n,\\
  q^{(t+1)}_n e^{(t+1)}_n=q^{(t)}_{n+1} e^{(t)}_n
\end{gather*}
with the finite lattice condition
\begin{equation*}
  e^{(t)}_{-1}=e^{(t)}_{N-1}=0
\end{equation*}
through ultradiscretization~\cite{nagai1999sca}.
The right side of \fref{fig:bbs} shows an example of the time evolution of
the ultradiscrete Toda lattice~\eqref{eq:u-Toda},
in which the initial values are chosen to correspond to
the initial state of the original BBS on the left side.

We can introduce some extended rules to the original BBS.
The time evolution equations of the extended BBSs are derived from the nonautonomous
discrete KP lattice through reduction and ultradiscretization,
or from the geometric crystal for $\widehat{\mathfrak{sl}}_{M+1}$
through crystallization~\cite{hatayama2001taa,inoue2012isb}.
We have already known the following correspondences
between the BBS with an extended rule and an ultradiscrete Toda type system:
\begin{enumerate}
\item The BBS with many kinds of balls and the ultradiscrete hungry Toda lattice~\cite{tokihiro1999pos};
\item The BBS with a carrier of balls whose capacity is finite
  and the nonautonomous ultradiscrete Toda lattice~\cite{maeda2010bbs};
\item The BBS with boxes whose capacity is greater than one and
  a generalized ultradiscrete Toda lattice~\cite{maeda2012ftr}.
\end{enumerate}
In addition, there are studies on the relation between
the BBS and the ultradiscrete Toda lattice
for the case of a periodic boundary condition~\cite{idzumi2009siv},
and for the case in which the number of balls in each box can take any real value~\cite{gilson2015dau}.

In this paper, we wish to derive and study an ultradiscrete Toda type system
corresponding to the BBS with both the rules (i) and (ii).
To this end, in \sref{sec:biorth-polyn}, we first consider the theory of
biorthogonal polynomials and derive a nonautonomous version of
the discrete two-dimensional Toda lattice (nd-2D-Toda lattice).
It is known that the theory of (bi)orthogonal functions
is a very useful tool for deriving and analyzing many Toda type systems
and their solutions~\cite{spiridonov1995ddt,kharchev1997frt,spiridonov1997dtv,spiridonov2000stc,spiridonov2007idt,miki2012dst,maeda2013dcr,aptekarev2016mtl}.
In addition to these known results,
we will show, based on the previous studies~\cite{iserles1988tbp,adler1997sop,tsujimoto2010dso,tsujimoto2000msde},
that the nd-2D-Toda lattice is derived as compatibility conditions for spectral transformations
of biorthogonal polynomials.
Since the discrete two-dimensional Toda lattice hierarchy
yields many discrete Toda type systems by imposing reduction conditions,
we will be able to derive many nonautonomous discrete Toda type systems
from the nd-2D-Toda lattice hierarchy in the same manner.
In \sref{sec:m-1-reduction}, we will impose $(M, 1)$-reduction condition
for biorthogonal polynomials and the nd-2D-Toda lattice,
and derive a nonautonomous version of the discrete hungry Toda lattice (ndh-Toda lattice).
Further, we will impose a finite lattice boundary condition to the ndh-Toda lattice
and give a particular solution coming from a determinant structure of the biorthogonal
polynomials. We will also give a condition for the positivity of the solution.
In \sref{sec:ultradiscretization}, we will ultradiscretize the ndh-Toda lattice
and its solution, and prove that the derived ultradiscrete system
is another time evolution equation of the generalized BBS.
\Sref{sec:concluding-remarks} is devoted to concluding remarks.

\section{Biorthogonal polynomials and semi-infinite lattice equations}\label{sec:biorth-polyn}

In this section, we consider the theory of biorthogonal polynomials,
and derive the nd-2D-Toda lattice
with a semi-infinite lattice boundary condition as compatibility conditions
for spectral transformations of the biorthogonal polynomials.

\subsection{Definitions and determinant representations}
Let $\mathcal B\colon \mathbb C[z]\times \mathbb C[z] \to \mathbb C$ be
a bilinear form. Let us consider polynomial sequences $\{\phi_n(z)\}_{n=0}^\infty$
and $\{\psi_n(z)\}_{n=0}^\infty$ satisfying the following properties:
\begin{enumerate}\renewcommand{\labelenumi}{(\roman{enumi})}
\item $\deg\phi_n(z)=\deg\psi_n(z)=n$;
\item The polynomials $\phi_n(z)$ and $\psi_n(z)$ are monic; i.e.
  the leading coefficients of $\phi_n(z)$ and $\psi_n(z)$ are one;
\item The \emph{biorthogonal relation} with respect to $\mathcal B$
  \begin{equation}\label{eq:biorthogonal-relation}
    \mathcal B[\phi_m(z), \psi_n(z)]=h_n \delta_{m, n},\quad
    h_n \ne 0,\quad
    m, n=0, 1, 2, \dots,
  \end{equation}
  holds, where $\delta_{m, n}$ is the Kronecker delta.
\end{enumerate}
We call the polynomial sequences $\{\phi_n(z)\}_{n=0}^\infty$
and $\{\psi_n(z)\}_{n=0}^\infty$
the pair of monic \emph{biorthogonal polynomial sequences} with
respect to $\mathcal B$.

Note that, since both $\{\phi_n(z)\}_{n=0}^\infty$
and $\{\psi_n(z)\}_{n=0}^\infty$ span $\mathbb C[z]$,
the biorthogonal relation~\eqref{eq:biorthogonal-relation} is equivalent to
\begin{multline}\label{eq:biorthogonal-relation-2}
  \mathcal B[\phi_n(z), z^m]=h_n \delta_{m, n},\quad
  \mathcal B[z^m, \psi_n(z)]=h_n \delta_{m, n},\\
  n=0, 1, 2, \dots, \quad
  m=0, 1, \dots, n.
\end{multline}

\begin{theorem}\label{th:biorth-det-repr}
  The pair of monic biorthogonal polynomial sequences
  $\{\phi_n(z)\}_{n=0}^\infty$ and $\{\psi_n(z)\}_{n=0}^\infty$ has
  the determinant representations
  \begin{subequations}\label{eq:det-expression}
    \begin{flalign}
      \phi_0(z)&=1,&
      \phi_n(z)&=\frac{1}{\tau_n}
      \begin{vmatrix}
        \mu_{0, 0} & \mu_{0, 1} & \dots & \mu_{0, n-1} & 1\\
        \mu_{1, 0} & \mu_{1, 1} & \dots & \mu_{1, n-1} & z\\
        \vdots & \vdots & & \vdots & \vdots\\
        \mu_{n-1, 0} & \mu_{n-1, 1} & \dots & \mu_{n-1, n-1} & z^{n-1}\\
        \mu_{n, 0} & \mu_{n, 1} & \dots & \mu_{n, n-1} & z^{n}
      \end{vmatrix},&& n=1, 2, 3, \dots,\\
      \psi_0(z)&=1,&
      \psi_n(z)&=\frac{1}{\tau_n}
      \begin{vmatrix}
        \mu_{0, 0} & \mu_{0, 1} & \dots & \mu_{0, n-1} & \mu_{0, n}\\
        \mu_{1, 0} & \mu_{1, 1} & \dots & \mu_{1, n-1} & \mu_{1, n}\\
        \vdots & \vdots & & \vdots & \vdots\\
        \mu_{n-1, 0} & \mu_{n-1, 1} & \dots & \mu_{n-1, n-1} & \mu_{n-1, n}\\
        1 & z & \dots & z^{n-1} & z^n
      \end{vmatrix},&\ & n=1, 2, 3, \dots,
    \end{flalign}
  \end{subequations}
  where $\mu_{i, j}$ is the moment of $\mathcal B$ defined by
  \begin{equation*}
    \mu_{i, j}\coloneq\mathcal B[z^i, z^j],\quad i, j=0, 1, 2, \dots,
  \end{equation*}
  and $\tau_n$ is the determinant whose entries are the moments:
  \begin{equation*}
    \tau_0\coloneq1,\quad
    \tau_n\coloneq|\mu_{i, j}|_{i, j=0}^{n-1},\quad n=1, 2, 3, \dots.
  \end{equation*}
  Here, we assume that $\tau_n \ne 0$ for all $n=1, 2, 3, \dots$.
  The constant $h_n$ in the biorthogonal relation~\eqref{eq:biorthogonal-relation-2}
  is given by
  \begin{equation*}
    h_n=\frac{\tau_{n+1}}{\tau_n}.
  \end{equation*}
\end{theorem}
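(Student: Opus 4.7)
The plan is to verify the two listed properties (monicity and biorthogonality) directly for the candidate determinant formulas, and then to invoke a uniqueness argument to conclude that these are the only polynomials with the required properties. The underlying strategy is the standard Heine/Gram-determinant computation for orthogonal polynomials, adapted to a general bilinear form.

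First I would establish uniqueness. Writing $\phi_n(z)=z^n+\sum_{k=0}^{n-1}a_k z^k$, the $n$ conditions $\mathcal{B}[\phi_n(z),z^m]=0$ for $m=0,1,\dots,n-1$ become a linear system $\sum_{k=0}^{n-1}\mu_{k,m}a_k=-\mu_{n,m}$ whose coefficient matrix is $(\mu_{k,m})_{k,m=0}^{n-1}$, with determinant $\tau_n\neq 0$. So $\phi_n$ is uniquely determined; an identical argument with $\mathcal{B}[z^m,\cdot]$ pins down $\psi_n$.

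Next I would verify that the formula in~\eqref{eq:det-expression} for $\phi_n$ satisfies the required properties. For monicity, cofactor expansion along the last column shows that the coefficient of $z^n$ is $\tau_n/\tau_n=1$. For biorthogonality, I apply $\mathcal{B}[\cdot,z^m]$ row by row: using the multilinearity of the determinant together with the linearity of $\mathcal{B}$ in its first slot, the $z^k$ entries in the last column get replaced by $\mathcal{B}[z^k,z^m]=\mu_{k,m}$, yielding
\begin{equation*}
  \mathcal{B}[\phi_n(z),z^m]=\frac{1}{\tau_n}
  \begin{vmatrix}
    \mu_{0,0} & \cdots & \mu_{0,n-1} & \mu_{0,m}\\
    \vdots & & \vdots & \vdots\\
    \mu_{n,0} & \cdots & \mu_{n,n-1} & \mu_{n,m}
  \end{vmatrix}.
\end{equation*}
For $m=0,1,\dots,n-1$ the last column duplicates an earlier column, so the determinant vanishes; for $m=n$ it equals $\tau_{n+1}$, giving $h_n=\tau_{n+1}/\tau_n$. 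The argument for $\psi_n$ is symmetric: one applies $\mathcal{B}[z^m,\cdot]$ column by column to the other determinant (where the monomial line is the last row rather than the last column), using linearity of $\mathcal{B}$ in its second slot.

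I do not expect a serious obstacle here, since $\mathcal{B}$ is merely bilinear and all matrices are finite, so no analytic issue arises. The only point requiring care is the row/column asymmetry between the $\phi_n$ and $\psi_n$ formulas, which reflects the two slots of $\mathcal{B}$; keeping the cofactor bookkeeping straight is the most error-prone step, but it is entirely mechanical.
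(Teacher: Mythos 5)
Your proof is correct. It differs from the paper's argument mainly in logical direction: the paper writes $\phi_n(z)=z^n+\sum_{i=0}^{n-1}c_{n,i}z^i$, encodes the biorthogonality conditions as an $(n{+}1)\times(n{+}1)$ linear system, and reads off both the determinant representation and $h_n=\tau_{n+1}/\tau_n$ in one stroke from Cramer's rule; you instead use the same nonsingular moment system only to establish uniqueness, and then verify the candidate determinant formulas directly — monicity from the cofactor of the $z^n$ entry, biorthogonality by expanding along the monomial column/row and using linearity of $\mathcal B$ so that $\mathcal B[\phi_n(z),z^m]$ becomes a determinant with a repeated column for $m<n$ and equals $\tau_{n+1}$ for $m=n$. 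Both routes hinge on $\tau_n\neq 0$ and are essentially equivalent in content; the paper's is marginally more economical since Cramer's rule produces the formula rather than merely certifying it, while yours makes the uniqueness of the pair (which the paper only remarks on after the proof) explicit and cleanly separates existence of the formula from its verification. Your handling of the $\phi_n$/$\psi_n$ asymmetry (first slot versus second slot of $\mathcal B$, column versus row of monomials) is the right bookkeeping and matches the paper's ``in the same manner'' treatment of $\psi_n$.
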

\begin{proof}
  Let $c_{n, i}$ be the coefficients of the polynomial $\phi_n(z)$:
  \begin{equation*}
    \phi_n(z)=z^n+\sum_{i=0}^{n-1} c_{n, i} z^i.
  \end{equation*}
  Then, the biorthogonal relation~\eqref{eq:biorthogonal-relation-2} gives
  the linear equation
  \begin{equation*}
    \begin{pmatrix}
      \mu_{0, 0} & \mu_{1, 0} & \dots & \mu_{n-1, 0} & \mu_{n, 0}\\
      \mu_{0, 1} & \mu_{1, 1} & \dots & \mu_{n-1, 1} & \mu_{n, 1}\\
      \vdots & \vdots & & \vdots & \vdots\\
      \mu_{0, n-1} & \mu_{1, n-1} & \dots & \mu_{n-1, n-1} & \mu_{n, n-1}\\
      \mu_{0, n} & \mu_{1, n} & \dots & \mu_{n-1, n} & \mu_{n, n}
    \end{pmatrix}
    \begin{pmatrix}
      c_{n, 0}\\
      c_{n, 1}\\
      \vdots\\
      c_{n, n-1}\\
      1
    \end{pmatrix}=
    \begin{pmatrix}
      0\\
      0\\
      \vdots\\
      0\\
      h_n
    \end{pmatrix}.
  \end{equation*}
  Applying Cramer's rule yields the relation $1=\tau_n h_n/\tau_{n+1}$,
  which implies $h_n=\tau_{n+1}/\tau_n$, and
  the determinant representation of $\phi_n(z)$.
  The determinant representation of $\psi_n(z)$ is also given in the same manner.
\end{proof}

The proof of Theorem~\ref{th:biorth-det-repr} also claims that,
for a bilinear form $\mathcal B$ on $\mathbb C[z]$,
there is a \emph{unique} pair of monic biorthogonal polynomial sequences
$\{\phi_n(z)\}_{n=0}^\infty$ and $\{\psi_n(z)\}_{n=0}^\infty$ if exists.

\subsection{Spectral transformations}
From the pair of the monic biorthogonal polynomial sequences $\{\phi_n(z)\}_{n=0}^\infty$
and $\{\psi_n(z)\}_{n=0}^\infty$ with respect to $\mathcal B$,
we can construct new monic biorthogonal polynomials
\begin{subequations}\label{eq:Christoffel}
  \begin{alignat}{2}
    &\phi^*_n(z)\coloneq\frac{\phi_{n+1}(z)+q^*_n\phi_n(z)}{z-s^*},&\quad&
    q^*_n\coloneq-\frac{\phi_{n+1}(s^*)}{\phi_n(s^*)},\\
    &\psi^\dagger_n(z)\coloneq\frac{\psi_{n+1}(z)+q^\dagger_n\psi_n(z)}{z-s^\dagger},&&
    q^\dagger_n\coloneq-\frac{\psi_{n+1}(s^\dagger)}{\psi_n(s^\dagger)},
  \end{alignat}
\end{subequations}
for $n=0, 1, 2, \dots$,
where $s^*$ and $s^\dagger$ are any complex parameters satisfying
$\phi_n(s^*)\ne 0$ and $\psi_n(s^\dagger)\ne 0$ for all $n$.
Since $\phi_{n+1}(s^*)+q^*_n\phi_n(s^*)=0$ and
$\psi_{n+1}(s^\dagger)+q^\dagger_n \psi_n(s^\dagger)=0$,
both $\phi^*_n(z)$ and $\psi^\dagger_n(z)$ are monic $n$th degree polynomials.
The new polynomials are biorthogonal with respect to the bilinear forms
$\mathcal B^*$ and $\mathcal B^\dagger$ defined by
\begin{multline}\label{eq:bilinear-rel}
  \mathcal B^*[z^i, z^j]\coloneq\mathcal B[(z-s^*)z^i, z^j],\quad
  \mathcal B^\dagger[z^i, z^j]\coloneq\mathcal B[z^i, (z-s^\dagger)z^j],\\
  i, j=0, 1, 2, \dots.
\end{multline}
We can readily verify that the biorthogonal relations
\begin{gather*}
  \mathcal B^*[\phi^*_n(z), z^m]
  =\mathcal B[\phi_{n+1}(z)+q^*_n\phi_n(z), z^m]
  =\frac{q^*_n\tau_{n+1}}{\tau_n} \delta_{m, n},\\
  \mathcal B^\dagger[z^m, \psi^\dagger_n(z)]
  =\mathcal B[z^m, \psi_{n+1}(z)+q^\dagger_n\psi_n(z)]
  =\frac{q^\dagger_n\tau_{n+1}}{\tau_n} \delta_{m, n}
\end{gather*}
indeed hold for $n=0, 1, 2, \dots$ and $m=0, 1, \dots, n$.
The relations~\eqref{eq:Christoffel} are analogues of the Christoffel transformation
for monic orthogonal polynomials.

Next, let us derive a relation between the polynomial sequences $\{\phi_n(z)\}_{n=0}^\infty$ and
$\{\phi^\dagger_n(z)\}_{n=0}^\infty$ satisfying the biorthogonal relation
\begin{equation*}
  \mathcal B^\dagger[\phi^\dagger_n(z), z^m]=\frac{q^\dagger_n\tau_{n+1}}{\tau_n}\delta_{m, n},\quad
  n=0, 1, 2, \dots,\quad m=0, 1, \dots, n.
\end{equation*}
For each $n=0, 1, 2, \dots$, write $\phi_{n+1}(z)$ as a linear combination of
$\phi^\dagger_0(z), \phi^\dagger_1(z), \dots, \phi^\dagger_{n+1}(z)$:
\begin{equation*}
  \phi_{n+1}(z)=\phi^\dagger_{n+1}(z)+\sum_{i=0}^{n} c_{n+1, i} \phi^\dagger_i(z).
\end{equation*}
Then we have, if $n \ge 1$,
\begin{gather*}
  \mathcal B[\phi_{n+1}(z), z-s^\dagger]
  =\mathcal B^\dagger\left[\phi^\dagger_{n+1}(z)+\sum_{i=0}^{n} c_{n+1, i} \phi^\dagger_i(z), 1\right]
  =c_{n+1, 0}\frac{q^\dagger_0 \tau_1}{\tau_0}=0.
\end{gather*}
Since $q^\dagger_0 \tau_1/\tau_0 \ne 0$, that implies $c_{n+1, 0}=0$.
In the same manner,
the equation $\mathcal B[\phi_{n+1}(z), (z-s^\dagger)z^m]=0$ implies $c_{n+1, m}=0$,
$m=0, 1, \dots, n-1$, by induction on $m$. Finally,
\begin{equation*}
  \fl
  \mathcal B[\phi_{n+1}(z), (z-s^\dagger)z^{n}]
  =\mathcal B^\dagger[\phi^\dagger_{n+1}(z)+c_{n+1, n}\phi^\dagger_{n}(z), z^{n}]
  =c_{n+1, n}\frac{q^\dagger_{n}\tau_{n+1}}{\tau_n}=\frac{\tau_{n+2}}{\tau_{n+1}}.
\end{equation*}
Thus $c_{n+1, n}=\tau_n \tau_{n+2}/q^\dagger_n(\tau_{n+1})^2\ne 0$.
We can show by similar discussions that the relation
\begin{equation*}
  \psi_{n+1}(z)=\psi^*_{n+1}(z)+\frac{\tau_n \tau_{n+2}}{q^*_{n}(\tau_{n+1})^2}\psi^*_{n}(z)
\end{equation*}
also holds.

We summarize the results above.

\begin{theorem}\label{th:st}
  Let $\mathcal B$, $\mathcal B^*$ and $\mathcal B^\dagger$ be bilinear
  forms on $\mathbb C[z]$ connected by the relations~\eqref{eq:bilinear-rel}.
  Suppose that $\{\phi_n(z)\}_{n=0}^\infty$ and $\{\psi_n(z)\}_{n=0}^\infty$,
  $\{\phi^*_n(z)\}_{n=0}^\infty$ and $\{\psi^*_n(z)\}_{n=0}^\infty$,
  and $\{\phi^\dagger_n(z)\}_{n=0}^\infty$ and $\{\psi^\dagger_n(z)\}_{n=0}^\infty$
  are the pairs of the monic biorthogonal polynomial sequences with respect to
  $\mathcal B$, $\mathcal B^*$, and $\mathcal B^\dagger$, respectively.
  Then, these polynomial sequences satisfy the following relations:
  \begin{subequations}\label{eq:st}
    \begin{alignat}{2}
      (z-s^*)\phi^*_n(z)&=\phi_{n+1}(z)+q^*_n \phi_n(z),&\quad&
      q^*_n=-\frac{\phi_{n+1}(s^*)}{\phi_n(s^*)},\\
      \phi_{n+1}(z)&=\phi^\dagger_{n+1}(z)+e^\dagger_n \phi^\dagger_n(z),&&
      e^\dagger_n=\frac{\tau_n \tau_{n+2}}{q^\dagger_{n}(\tau_{n+1})^2},\label{eq:Geronimus-1}\\
      (z-s^\dagger)\psi^\dagger_n(z)&=\psi_{n+1}(z)+q^\dagger_n \psi_n(z),&\quad&
      q^\dagger_n=-\frac{\psi_{n+1}(s^\dagger)}{\psi_n(s^\dagger)},\\
      \psi_{n+1}(z)&=\psi^*_{n+1}(z)+e^*_n \psi^*_n(z),&&
      e^*_n=\frac{\tau_n \tau_{n+2}}{q^*_{n}(\tau_{n+1})^2},\label{eq:Geronimus-2}
    \end{alignat}
  \end{subequations}
  for $n=0, 1, 2, \dots$.
\end{theorem}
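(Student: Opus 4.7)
The plan is to split the four identities of \eqref{eq:st} into two pairs. The Christoffel-type identities (namely the first and third) fall out immediately from the defining formulas \eqref{eq:Christoffel}: clearing the denominators $(z-s^*)$ and $(z-s^\dagger)$ in the definitions of $\phi^*_n(z)$ and $\psi^\dagger_n(z)$ gives exactly those identities, and the stated expressions for $q^*_n$ and $q^\dagger_n$ are the ones that enforce divisibility at $z=s^*$ and $z=s^\dagger$. So the content of the theorem really lies in the Geronimus-type identities \eqref{eq:Geronimus-1} and \eqref{eq:Geronimus-2}.

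For \eqref{eq:Geronimus-1}, my plan is to use $\{\phi^\dagger_i(z)\}_{i=0}^{n+1}$ as a basis of polynomials of degree at most $n+1$ and to write
\begin{equation*}
  \phi_{n+1}(z)=\phi^\dagger_{n+1}(z)+\sum_{i=0}^{n} c_{n+1,i}\phi^\dagger_i(z),
\end{equation*}
where the leading coefficient is $1$ because both sequences are monic. To identify the unknowns $c_{n+1,i}$, I will apply $\mathcal B$ to the pairing of $\phi_{n+1}(z)$ with $(z-s^\dagger)z^m$ for $m=0,1,\dots,n$, convert the left side by means of the definition $\mathcal B^\dagger[p,q]=\mathcal B[p,(z-s^\dagger)q]$, and use the biorthogonality~\eqref{eq:biorthogonal-relation-2} of the $\phi^\dagger_i$ to single out one coefficient at a time. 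For $m=0,\dots,n-1$ the left side vanishes by biorthogonality of $\phi_{n+1}$ against low-degree monomials, and an induction on $m$ yields $c_{n+1,m}=0$ successively. For $m=n$ the left side gives $h_{n+1}=\tau_{n+2}/\tau_{n+1}$, while the right side collapses to $c_{n+1,n}\cdot q^\dagger_n\tau_{n+1}/\tau_n$; solving gives $e^\dagger_n=c_{n+1,n}=\tau_n\tau_{n+2}/(q^\dagger_n\tau_{n+1}^2)$, which is the required formula.

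The proof of \eqref{eq:Geronimus-2} is symmetric: expand $\psi_{n+1}(z)$ in the basis $\{\psi^*_i(z)\}_{i=0}^{n+1}$, test by pairing $z^m(z-s^*)$ against $\psi_{n+1}(z)$ on the left via $\mathcal B$ and against the expansion on the right via $\mathcal B^*[p,q]=\mathcal B[(z-s^*)p,q]$, and run the analogous induction. The nonvanishing denominators $q^*_n$ and $q^\dagger_n$ in $e^*_n$ and $e^\dagger_n$ are guaranteed by the assumptions $\phi_n(s^*)\ne 0$ and $\psi_n(s^\dagger)\ne 0$ that make the Christoffel transformations well defined in the first place, and the nonvanishing of $\tau_n$ has been assumed throughout.

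I do not anticipate any genuine obstacle: the two arguments above are exactly what is sketched in the paragraphs preceding the theorem, and the four identities were collected into one statement only for bookkeeping. The one point that deserves care is the induction that eliminates $c_{n+1,0},\dots,c_{n+1,n-1}$ in sequence, because at each step one must observe that the previously eliminated coefficients kill all but one term on the right-hand side before invoking biorthogonality to zero out the next one; handling this cleanly is the only part of the write-up that needs attention beyond routine bookkeeping.
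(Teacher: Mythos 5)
Your proposal is correct and follows essentially the same route as the paper: the Christoffel-type relations are read off directly from the definitions~\eqref{eq:Christoffel}, and the Geronimus-type relations \eqref{eq:Geronimus-1}--\eqref{eq:Geronimus-2} are obtained by expanding $\phi_{n+1}(z)$ (resp.\ $\psi_{n+1}(z)$) in the $\phi^\dagger$ (resp.\ $\psi^*$) basis and killing the lower coefficients by testing against $(z-s^\dagger)z^m$ (resp.\ $(z-s^*)z^m$) with induction on $m$, exactly as in the text preceding the theorem. The point you flag about the order of elimination in the induction is precisely the care the paper's argument takes, so there is no gap.
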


We should remark that the relations~\eqref{eq:Geronimus-1} and \eqref{eq:Geronimus-2}
are analogues of the Geronimus transformation for monic orthogonal polynomials.

Let us define the moment of $\mathcal B^*$ and $\mathcal B^\dagger$
\begin{equation*}
  \mu^*_{i, j}\coloneq\mathcal B^*[z^i, z^j],\quad
  \mu^\dagger_{i, j}\coloneq \mathcal B^\dagger[z^i, z^j],\quad
  i, j=0, 1, 2, \dots,
\end{equation*}
and the determinant
\begin{equation*}
  \tau^*_0\coloneq1,\quad \tau^\dagger_0\coloneq1,\quad
  \tau^*_n\coloneq|\mu^*_{i, j}|_{i, j=0}^{n-1},\quad
  \tau^\dagger_n\coloneq|\mu^\dagger_{i, j}|_{i, j=0}^{n-1},\quad
  n=1, 2, 3, \dots.
\end{equation*}
Then, from \eqref{eq:bilinear-rel}, we have
\begin{equation}\label{eq:dispersion}
  \mu^*_{i, j}=\mu_{i+1, j}-s^* \mu_{i, j},\quad
  \mu^\dagger_{i, j}=\mu_{i, j+1}-s^\dagger \mu_{i, j}.
\end{equation}
Applying the elementary column-additions (or row-additions) to the determinant representations
of the monic biorthogonal polynomials~\eqref{eq:det-expression}
and using the relations~\eqref{eq:dispersion},
we have
\begin{equation*}
  \phi_n(s^*)=(-1)^n \frac{\tau^*_n}{\tau_n},\quad
  \psi_n(s^\dagger)=(-1)^n \frac{\tau^\dagger_n}{\tau_n}.
\end{equation*}
Hence, the variables appear in \eqref{eq:st} are rewritten as
\begin{equation*}
  q^*_n=\frac{\tau_n \tau^*_{n+1}}{\tau_{n+1}\tau^*_n},\quad
  e^\dagger_n=\frac{\tau_{n+2}\tau^\dagger_n}{\tau_{n+1}\tau^\dagger_{n+1}},\quad
  q^\dagger_n=\frac{\tau_n \tau^\dagger_{n+1}}{\tau_{n+1}\tau^\dagger_n},\quad
  e^*_n=\frac{\tau_{n+2}\tau^*_n}{\tau_{n+1}\tau^*_{n+1}}.
\end{equation*}

\subsection{Nonautonomous discrete semi-infinite two-dimensional Toda lattice}
Let us introduce discrete time variables $k_1, k_2, t_1, t_2$
into bilinear forms as follows:
\begin{gather*}
  \mathcal B^{(k_1+1, k_2, t_1, t_2)}[z^i, z^j]\coloneq\mathcal B^{(k_1, k_2, t_1, t_2)}[z^{i+1}, z^{j}],\\
  \mathcal B^{(k_1, k_2+1, t_1, t_2)}[z^i, z^j]\coloneq\mathcal B^{(k_1, k_2, t_1, t_2)}[z^{i}, z^{j+1}],\\
  \mathcal B^{(k_1, k_2, t_1+1, t_2)}[z^i, z^j]\coloneq\mathcal B^{(k_1, k_2, t_1, t_2)}[(z-s_1^{(t_1)})z^{i}, z^{j}],\\
  \mathcal B^{(k_1, k_2, t_1, t_2+1)}[z^i, z^j]\coloneq\mathcal B^{(k_1, k_2, t_1, t_2)}[z^{i}, (z-s_2^{(t_2)})z^{j}]
\end{gather*}
for all $i, j=0, 1, 2, \dots$,
where $s_1^{(t_1)}$ and $s_2^{(t_2)}$ are parameters chosen at
each $t_1$ and $t_2$, respectively.
Then, the moment
\begin{equation*}
  \mu^{(t_1, t_2)}_{i, j}\coloneq\mathcal B^{(0, 0, t_1, t_2)}[z^i, z^j]
\end{equation*}
has the relations
\begin{gather*}
  \mathcal B^{(k_1, k_2, t_1, t_2)}[z^i, z^j]=\mu^{(t_1, t_2)}_{k_1+i, k_2+j},\\
  \mu^{(t_1+1, t_2)}_{i, j}=\mu^{(t_1, t_2)}_{i+1, j}-s_1^{(t_1)}\mu^{(t_1, t_2)}_{i, j},\quad
  \mu^{(t_1, t_2+1)}_{i, j}=\mu^{(t_1, t_2)}_{i, j+1}-s_2^{(t_2)}\mu^{(t_1, t_2)}_{i, j}.
\end{gather*}

Let $\{\phi^{(k_1, k_2, t_1, t_2)}_n(z)\}_{n=0}^\infty$ be
one of the pair of the monic biorthogonal polynomial sequences with respect to $\mathcal B^{(k_1, k_2, t_1, t_2)}$:
\begin{multline*}
  \mathcal B^{(k_1, k_2, t_1, t_2)}[\phi^{(k_1, k_2, t_1, t_2)}_n(z), z^m]=\frac{\tau^{(k_1, k_2, t_1, t_2)}_{n+1}}{\tau^{(k_1, k_2, t_1, t_2)}_{n}}\delta_{m, n},\\
  n=0, 1, 2, \dots, \quad m=0, 1, \dots, n,
\end{multline*}
where
\begin{equation*}
  \tau^{(k_1, k_2, t_1, t_2)}_0\coloneq1,\quad
  \tau^{(k_1, k_2, t_1, t_2)}_n\coloneq|\mu^{(t_1, t_2)}_{k_1+i, k_2+j}|_{i, j=0}^{n-1},\quad
  n=1, 2, 3, \dots.
\end{equation*}
The polynomials $\{\phi^{(k_1, k_2, t_1, t_2)}_n(z)\}_{n=0}^\infty$ also have the relations
\begin{equation*}
  \phi^{(k_1, k_2, t_1, t_2)}_n(0)=\frac{\tau^{(k_1+1, k_2, t_1, t_2)}_n}{\tau^{(k_1, k_2, t_1, t_2)}_n},\quad
  \phi^{(k_1, k_2, t_1, t_2)}_n(s_1^{(t_1)})=\frac{\tau^{(k_1, k_2, t_1+1, t_2)}_n}{\tau^{(k_1, k_2, t_1, t_2)}_n}.
\end{equation*}
From Theorem~\ref{th:st}, the polynomials
$\{\phi^{(k_1, k_2, t_1, t_2)}_n(z)\}_{n=0}^\infty$ satisfy
\begin{subequations}\label{eq:st-b}
  \begin{flalign}
    z\phi^{(k_1+1, k_2, t_1, t_2)}_n(z)&=\phi^{(k_1, k_2, t_1, t_2)}_{n+1}(z)+q^{(k_1, k_2, t_1, t_2)}_n \phi^{(k_1, k_2, t_1, t_2)}_n(z),\label{eq:st-b-1}&\\
    \phi^{(k_1, k_2-1, t_1, t_2)}_{n+1}(z)&=\phi^{(k_1, k_2, t_1, t_2)}_{n+1}(z)+e^{(k_1, k_2-1, t_1, t_2)}_n \phi^{(k_1, k_2, t_1, t_2)}_n(z),\label{eq:st-b-2}&\\
    (z-s_1^{(t_1)})\phi^{(k_1, k_2, t_1+1, t_2)}_n(z)&=\phi^{(k_1, k_2, t_1, t_2)}_{n+1}(z)+\tilde q^{(k_1, k_2, t_1, t_2)}_n \phi^{(k_1, k_2, t_1, t_2)}_n(z),\label{eq:st-b-3}&\\
    \phi^{(k_1, k_2, t_1, t_2-1)}_{n+1}(z)&=\phi^{(k_1, k_2, t_1, t_2)}_{n+1}(z)+\tilde e^{(k_1, k_2, t_1, t_2-1)}_n \phi^{(k_1, k_2, t_1, t_2)}_n(z)&\label{eq:st-b-4}
  \end{flalign}
\end{subequations}
for $n=0, 1, 2, \dots$, where
\begin{xxalignat}{4}
  q^{(k_1, k_2, t_1, t_2)}_n=\frac{\tau^{(k_1, k_2, t_1, t_2)}_n\tau^{(k_1+1, k_2, t_1, t_2)}_{n+1}}{\tau^{(k_1, k_2, t_1, t_2)}_{n+1}\tau^{(k_1+1, k_2, t_1, t_2)}_n},&\quad&
  e^{(k_1, k_2, t_1, t_2)}_n=\frac{\tau^{(k_1, k_2, t_1, t_2)}_{n+2}\tau^{(k_1, k_2+1, t_1, t_2)}_n}{\tau^{(k_1, k_2, t_1, t_2)}_{n+1}\tau^{(k_1, k_2+1, t_1, t_2)}_{n+1}},\\
  \tilde q^{(k_1, k_2, t_1, t_2)}_n=\frac{\tau^{(k_1, k_2, t_1, t_2)}_n\tau^{(k_1, k_2, t_1+1, t_2)}_{n+1}}{\tau^{(k_1, k_2, t_1, t_2)}_{n+1}\tau^{(k_1, k_2, t_1+1, t_2)}_n},&&
  \tilde e^{(k_1, k_2, t_1, t_2)}_n=\frac{\tau^{(k_1, k_2, t_1, t_2)}_{n+2}\tau^{(k_1, k_2, t_1, t_2+1)}_n}{\tau^{(k_1, k_2, t_1, t_2)}_{n+1}\tau^{(k_1, k_2, t_1, t_2+1)}_{n+1}}.
\end{xxalignat}

By using the semi-infinite bidiagonal matrices
\begin{flalign*}
  &R^{(k_1, k_2, t_1, t_2)}\coloneq\left(q^{(k_1, k_2, t_1, t_2)}_j\delta_{i, j}+\delta_{i+1, j}\right)_{i, j=0}^\infty,&&
  L^{(k_1, k_2, t_1, t_2)}\coloneq\left(e^{(k_1, k_2, t_1, t_2)}_j\delta_{i, j+1}+\delta_{i, j}\right)_{i, j=0}^\infty,\\
  &\tilde R^{(k_1, k_2, t_1, t_2)}\coloneq\left(\tilde q^{(k_1, k_2, t_1, t_2)}_j\delta_{i, j}+\delta_{i+1, j}\right)_{i, j=0}^\infty,&&
  \tilde L^{(k_1, k_2, t_1, t_2)}\coloneq\left(\tilde e^{(k_1, k_2, t_1, t_2)}_j\delta_{i, j+1}+\delta_{i, j}\right)_{i, j=0}^\infty
\end{flalign*}
and the semi-infinite vector
\begin{equation*}
  \bm \phi^{(k_1, k_2, t_1, t_2)}(z)\coloneq
  \begin{pmatrix}
    \phi^{(k_1, k_2, t_1, t_2)}_0(z)\\
    \phi^{(k_1, k_2, t_1, t_2)}_1(z)\\
    \phi^{(k_1, k_2, t_1, t_2)}_2(z)\\
    \vdots
  \end{pmatrix}
\end{equation*}
the relations~\eqref{eq:st-b} are rewritten as
\begin{subequations}\label{eq:st-b-mat}
  \begin{align}
    z\bm\phi^{(k_1+1, k_2, t_1, t_2)}(z)&=R^{(k_1, k_2, t_1, t_2)}\bm \phi^{(k_1, k_2, t_1, t_2)}(z),\label{eq:st-b-mat-1}\\
    \bm\phi^{(k_1, k_2-1, t_1, t_2)}(z)&=L^{(k_1, k_2-1, t_1, t_2)}\bm \phi^{(k_1, k_2, t_1, t_2)}(z),\label{eq:st-b-mat-2}\\
    (z-s_1^{(t_1)})\bm\phi^{(k_1, k_2, t_1+1, t_2)}(z)&=\tilde R^{(k_1, k_2, t_1, t_2)}\bm \phi^{(k_1, k_2, t_1, t_2)}(z),\label{eq:st-b-mat-3}\\
    \bm\phi^{(k_1, k_2, t_1, t_2-1)}(z)&=\tilde L^{(k_1, k_2, t_1, t_2-1)}\bm \phi^{(k_1, k_2, t_1, t_2)}(z).\label{eq:st-b-mat-4}
  \end{align}
\end{subequations}
From \eqref{eq:st-b-mat-1} and \eqref{eq:st-b-mat-2}, we have
\begin{align*}
  z\bm\phi^{(k_1+1, k_2, t_1, t_2)}(z)
  &=L^{(k_1+1, k_2, t_1, t_2)}R^{(k_1, k_2+1, t_1, t_2)}\bm\phi^{(k_1, k_2+1, t_1, t_2)}(z)\\
  &=R^{(k_1, k_2, t_1, t_2)}L^{(k_1, k_2, t_1, t_2)}\bm\phi^{(k_1, k_2+1, t_1, t_2)}(z),
\end{align*}
whose each element gives
\begin{flalign*}
  &\phantom{{}={}}z\phi^{(k_1+1, k_2, t_1, t_2)}_n(z)-\phi^{(k_1, k_2+1, t_1, t_2)}_{n+1}(z)&\\
  &=(q^{(k_1, k_2+1, t_1, t_2)}_n+e^{(k_1+1, k_2, t_1, t_2)}_{n-1})\phi^{(k_1, k_2+1, t_1, t_2)}_n(z)
  +q^{(k_1, k_2+1, t_1, t_2)}_{n-1}e^{(k_1+1, k_2, t_1, t_2)}_{n-1}\phi^{(k_1, k_2+1, t_1, t_2)}_{n-1}(z)&\\
  &=(q^{(k_1, k_2, t_1, t_2)}_n+e^{(k_1, k_2, t_1, t_2)}_n)\phi^{(k_1, k_2+1, t_1, t_2)}_n(z)
  +q^{(k_1, k_2, t_1, t_2)}_n e^{(k_1, k_2, t_1, t_2)}_{n-1}\phi^{(k_1, k_2+1, t_1, t_2)}_{n-1}(z)&
\end{flalign*}
for $n=0, 1, 2, \dots$.
Hence, we obtain
\begin{subequations}\label{eq:d2dtoda}
  \begin{gather}
    q^{(k_1, k_2+1, t_1, t_2)}_n+e^{(k_1+1, k_2, t_1, t_2)}_{n-1}
    =q^{(k_1, k_2, t_1, t_2)}_n+e^{(k_1, k_2, t_1, t_2)}_n,\\
    q^{(k_1, k_2+1, t_1, t_2)}_{n}e^{(k_1+1, k_2, t_1, t_2)}_{n}
    =q^{(k_1, k_2, t_1, t_2)}_{n+1} e^{(k_1, k_2, t_1, t_2)}_{n}
  \end{gather}
with the boundary condition
\begin{equation}\label{eq:bc-d2dtoda}
  e^{(k_1, k_2, t_1, t_2)}_{-1}=0
\end{equation}
for all $k_1, k_2, t_1, t_2 \in \mathbb Z$.
\end{subequations}
The discrete equations~\eqref{eq:d2dtoda} are the compatibility conditions
for the spectral transformations~\eqref{eq:st-b-1} and \eqref{eq:st-b-2}, and called
the discrete two-dimensional Toda lattice.

Similar calculations for each pair of the spectral transformations \eqref{eq:st-b-mat}
also yield
\begin{flalign*}
  z(z-s_1^{(t_1)})\bm\phi^{(k_1+1, k_2, t_1+1, t_2)}(z)
  &=R^{(k_1, k_2, t_1+1, t_2)}\tilde R^{(k_1, k_2, t_1, t_2)}\bm \phi^{(k_1, k_2, t_1, t_2)}(z)&\\
  &=\tilde R^{(k_1+1, k_2, t_1, t_2)}R^{(k_1, k_2, t_1, t_2)}\bm \phi^{(k_1, k_2, t_1, t_2)}(z),&\\
  (z-s_1^{(t_1)})\bm\phi^{(k_1, k_2, t_1+1, t_2)}(z)
  &=L^{(k_1, k_2, t_1+1, t_2)}\tilde R^{(k_1, k_2+1, t_1, t_2)}\bm\phi^{(k_1, k_2+1, t_1, t_2)}(z)&\\
  &=\tilde R^{(k_1, k_2, t_1, t_2)}L^{(k_1, k_2, t_1, t_2)}\bm\phi^{(k_1, k_2+1, t_1, t_2)}(z),&\\
  z\bm\phi^{(k_1+1, k_2, t_1, t_2)}(z)
  &=\tilde L^{(k_1+1, k_2, t_1, t_2)}R^{(k_1, k_2, t_1, t_2+1)}\bm\phi^{(k_1, k_2, t_1, t_2+1)}(z)&\\
  &=R^{(k_1, k_2, t_1, t_2)}\tilde L^{(k_1, k_2, t_1, t_2)}\bm\phi^{(k_1, k_2, t_1, t_2+1)}(z)&\\
  \bm\phi^{(k_1, k_2, t_1, t_2)}(z)
  &=\tilde L^{(k_1, k_2, t_1, t_2)}L^{(k_1, k_2, t_1, t_2+1)}\bm\phi^{(k_1, k_2+1, t_1, t_2+1)}(z)&\\
  &=L^{(k_1, k_2, t_1, t_2)}\tilde L^{(k_1, k_2+1, t_1, t_2)}\bm\phi^{(k_1, k_2+1, t_1, t_2+1)}(z),&\\
  (z-s_1^{(t_1)})\bm\phi^{(k_1, k_2, t_1+1, t_2)}(z)
  &=\tilde L^{(k_1, k_2, t_1+1, t_2)}\tilde R^{(k_1, k_2, t_1, t_2+1)}\bm\phi^{(k_1, k_2, t_1, t_2+1)}(z)&\\
  &=\tilde R^{(k_1, k_2, t_1, t_2)}\tilde L^{(k_1, k_2, t_1, t_2)}\bm\phi^{(k_1, k_2, t_1, t_2+1)}(z),&
\end{flalign*}
whose elements give the relations
\newlength\tempalength
\tempalength\belowdisplayskip
\belowdisplayskip-12pt plus 0pt minus 0pt
\begin{subequations}\label{eq:nd2d-toda-1}
  \begin{gather}
    q^{(k_1, k_2, t_1+1, t_2)}_{n}+\tilde q^{(k_1, k_2, t_1, t_2)}_{n+1}
    =q^{(k_1, k_2, t_1, t_2)}_{n+1}+\tilde q^{(k_1+1, k_2, t_1, t_2)}_{n},\\
    q^{(k_1, k_2, t_1+1, t_2)}_n\tilde q^{(k_1, k_2, t_1, t_2)}_n
    =q^{(k_1, k_2, t_1, t_2)}_n\tilde q^{(k_1+1, k_2, t_1, t_2)}_n,
  \end{gather}
\end{subequations}
\begin{subequations}\label{eq:nd2d-toda-2}
  \begin{gather}
    \tilde q^{(k_1, k_2+1, t_1, t_2)}_n+e^{(k_1, k_2, t_1+1, t_2)}_{n-1}
    =\tilde q^{(k_1, k_2, t_1, t_2)}_n+e^{(k_1, k_2, t_1, t_2)}_n,\\
    \tilde q^{(k_1, k_2+1, t_1, t_2)}_n e^{(k_1, k_2, t_1+1, t_2)}_n
    =\tilde q^{(k_1, k_2, t_1, t_2)}_{n+1} e^{(k_1, k_2, t_1, t_2)}_n,
  \end{gather}
\end{subequations}
\begin{subequations}\label{eq:nd2d-toda-3}
  \begin{gather}
    q^{(k_1, k_2, t_1, t_2+1)}_n+\tilde e^{(k_1+1, k_2, t_1, t_2)}_{n-1}
    =q^{(k_1, k_2, t_1, t_2)}_n+\tilde e^{(k_1, k_2, t_1, t_2)}_n,\\
    q^{(k_1, k_2, t_1, t_2+1)}_n \tilde e^{(k_1+1, k_2, t_1, t_2)}_n
    =q^{(k_1, k_2, t_1, t_2)}_{n+1} \tilde e^{(k_1, k_2, t_1, t_2)}_n,
  \end{gather}
\end{subequations}
\begin{subequations}\label{eq:nd2d-toda-4}
  \begin{gather}
    e^{(k_1, k_2, t_1, t_2+1)}_n+\tilde e^{(k_1, k_2, t_1, t_2)}_n
    =e^{(k_1, k_2, t_1, t_2)}_n+\tilde e^{(k_1, k_2+1, t_1, t_2)}_n,\\
    e^{(k_1, k_2, t_1, t_2+1)}_n \tilde e^{(k_1, k_2, t_1, t_2)}_{n+1}
    =e^{(k_1, k_2, t_1, t_2)}_{n+1} \tilde e^{(k_1, k_2+1, t_1, t_2)}_{n},
  \end{gather}
\end{subequations}
\begin{subequations}\label{eq:nd2d-toda-5}
  \begin{gather}
    \tilde q^{(k_1, k_2, t_1, t_2+1)}_n+\tilde e^{(k_1, k_2, t_1+1, t_2)}_{n-1}
    =\tilde q^{(k_1, k_2, t_1, t_2)}_n+\tilde e^{(k_1, k_2, t_1, t_2)}_n,\\
    \tilde q^{(k_1, k_2, t_1, t_2+1)}_{n}\tilde e^{(k_1, k_2, t_1+1, t_2)}_{n}
    =\tilde q^{(k_1, k_2, t_1, t_2)}_{n+1} \tilde e^{(k_1, k_2, t_1, t_2)}_{n}
    \global\belowdisplayskip\tempalength
  \end{gather}
\end{subequations}
for $n=0, 1, 2, \dots$ with the boundary condition~\eqref{eq:bc-d2dtoda} and
\begin{equation*}
  \tilde e^{(k_1, k_2, t_1, t_2)}_{-1}=0
\end{equation*}
for all $k_1, k_2, t_1, t_2 \in \mathbb Z$.

In these discrete equations,
the parameters $s_1^{(t_1)}$ and $s_2^{(t_2)}$ do not appear explicitly.
The parameters are, in fact, embedded into boundary conditions as follows.

\paragraph{For equations \eqref{eq:nd2d-toda-1}}
Subtraction of \eqref{eq:st-b-1} from \eqref{eq:st-b-3} yields the relation
\begin{equation}\label{eq:st-b-5}
  \fl
  (z-s_1^{(t_1)})\phi^{(k_1, k_2, t_1+1, t_2)}_n(z)=z\phi^{(k_1+1, k_2, t_1, t_2)}_n(z)+a^{(k_1, k_2, t_1, t_2)}_n\phi^{(k_1, k_2, t_1, t_2)}_n(z),
\end{equation}
where
\begin{equation*}
  a^{(k_1, k_2, t_1, t_2)}_n\coloneq\tilde q^{(k_1, k_2, t_1, t_2)}_n-q^{(k_1, k_2, t_1, t_2)}_n.
\end{equation*}
The relation~\eqref{eq:st-b-5} induces
\begin{align*}
  a^{(k_1, k_2, t_1, t_2)}_n
  &=-s_1^{(t_1)}\frac{\phi^{(k_1, k_2, t_1+1, t_2)}_n(0)}{\phi^{(k_1, k_2, t_1, t_2)}_n(0)}
  =-s_1^{(t_1)}\frac{\phi^{(k_1+1, k_2, t_1, t_2)}_n(s_1^{(t_1)})}{\phi^{(k_1, k_2, t_1, t_2)}_n(s_1^{(t_1)})}\\
  &=-s_1^{(t_1)}\frac{\tau^{(k_1, k_2, t_1, t_2)}_n\tau^{(k_1+1, k_2, t_1+1, t_2)}_n}{\tau^{(k_1+1, k_2, t_1, t_2)}_n\tau^{(k_1, k_2, t_1+1, t_2)}_n}.
\end{align*}
By using the variable $a^{(k_1, k_2, t_1, t_2)}_n$, equations~\eqref{eq:nd2d-toda-1}
are rewritten as
\begin{subequations}\label{eq:nd2d-toda-1-sf}
  \begin{gather}
    \tilde q^{(k_1, k_2, t_1, t_2)}_n=q^{(k_1, k_2, t_1, t_2)}_n+a^{(k_1, k_2, t_1, t_2)}_n,\label{eq:nd2d-toda-1-sf-1}\\
    q^{(k_1, k_2, t_1+1, t_2)}_n\tilde q^{(k_1, k_2, t_1, t_2)}_n
    =q^{(k_1, k_2, t_1, t_2)}_n\tilde q^{(k_1+1, k_2, t_1, t_2)}_n,\label{eq:nd2d-toda-1-sf-2}\\
    a^{(k_1, k_2, t_1, t_2)}_{n+1}\tilde q^{(k_1, k_2, t_1, t_2)}_n
    =a^{(k_1, k_2, t_1, t_2)}_n\tilde q^{(k_1+1, k_2, t_1, t_2)}_n,\label{eq:nd2d-toda-1-sf-3}
  \end{gather}
  for $n=0, 1, 2, \dots$ with the boundary condition
  \begin{equation}
    a^{(k_1, k_2, t_1, t_2)}_0=-s_1^{(t_1)}
  \end{equation}
  for all $k_1, k_2, t_1, t_2 \in \mathbb Z$.
\end{subequations}
Note that equation \eqref{eq:nd2d-toda-1-sf-1} is readily transformed
into the bilinear equation
\begin{equation}\label{eq:nd2d-toda-1-bilinear}
  \fl
  \tau^{(k_1, k_2, t_1+1, t_2)}_{n+1}\tau^{(k_1+1, k_2, t_1, t_2)}_n
  =\tau^{(k_1, k_2, t_1+1, t_2)}_n\tau^{(k_1+1, k_2, t_1, t_2)}_{n+1}
  -s_1^{(t_1)}\tau^{(k_1, k_2, t_1, t_2)}_{n+1}\tau^{(k_1+1, k_2, t_1+1, t_2)}_n
\end{equation}
and equations \eqref{eq:nd2d-toda-1-sf-2} and \eqref{eq:nd2d-toda-1-sf-3} are
obvious identical equations of $\tau^{(k_1, k_2, t_1, t_2)}_n$.

\paragraph{For equations \eqref{eq:nd2d-toda-2}}
Subtraction of \eqref{eq:st-b-2} from \eqref{eq:st-b-3} yields the relation
\begin{equation}\label{eq:st-b-6}
  \fl
  (z-s_1^{(t_1)})\phi^{(k_1, k_2, t_1+1, t_2)}_n(z)
  =\phi^{(k_1, k_2-1, t_1, t_2)}_{n+1}(z)+b^{(k_1, k_2-1, t_1, t_2)}_n \phi^{(k_1, k_2, t_1, t_2)}_n(z),
\end{equation}
where
\begin{equation*}
  b^{(k_1, k_2, t_1, t_2)}_n\coloneq\tilde q^{(k_1, k_2+1, t_1, t_2)}_n-e^{(k_1, k_2, t_1, t_2)}_n.
\end{equation*}
The relation \eqref{eq:st-b-6} induces
\begin{equation*}
  b^{(k_1, k_2, t_1, t_2)}_n
  =-\frac{\phi^{(k_1, k_2, t_1, t_2)}_{n+1}(s_1^{(t_1)})}{\phi^{(k_1, k_2+1, t_1, t_2)}_n(s_1^{(t_1)})}
  =\frac{\tau^{(k_1, k_2+1, t_1, t_2)}_n \tau^{(k_1, k_2, t_1+1, t_2)}_{n+1}}{\tau^{(k_1, k_2, t_1, t_2)}_{n+1} \tau^{(k_1, k_2+1, t_1+1, t_2)}_n}.
\end{equation*}
By using the variable $b^{(k_1, k_2, t_1, t_2)}_n$, equations~\eqref{eq:nd2d-toda-2}
are rewritten as
\begin{subequations}\label{eq:nd2d-toda-2-sf}
  \begin{gather}
    \tilde q^{(k_1, k_2+1, t_1, t_2)}_n=b^{(k_1, k_2, t_1, t_2)}_n+e^{(k_1, k_2, t_1, t_2)}_n,\label{eq:nd2d-toda-2-sf-1}\\
    \tilde q^{(k_1, k_2+1, t_1, t_2)}_n b^{(k_1, k_2, t_1, t_2)}_{n+1}
    =\tilde q^{(k_1, k_2, t_1, t_2)}_{n+1} b^{(k_1, k_2, t_1, t_2)}_n,\\
    \tilde q^{(k_1, k_2+1, t_1, t_2)}_n e^{(k_1, k_2, t_1+1, t_2)}_{n}
    =\tilde q^{(k_1, k_2, t_1, t_2)}_{n+1} e^{(k_1, k_2, t_1, t_2)}_n
  \end{gather}
  for $n=0, 1, 2, \dots$ with the boundary condition
  \begin{equation}
    b^{(k_1, k_2, t_1, t_2)}_0=q^{(k_1, k_2, t_1, t_2)}_0-s_1^{(t_1)}=\tilde q^{(k_1, k_2, t_1, t_2)}_0
  \end{equation}
  for all $k_1, k_2, t_1, t_2 \in \mathbb Z$.
\end{subequations}
Equation~\eqref{eq:nd2d-toda-2-sf-1} is transformed into the bilinear equation
\begin{equation*}
  \fl
  \tau^{(k_1, k_2, t_1, t_2)}_{n+1}\tau^{(k_1, k_2+1, t_1+1, t_2)}_{n+1}
  =\tau^{(k_1, k_2, t_1+1, t_2)}_{n+1}\tau^{(k_1, k_2+1, t_1, t_2)}_{n+1}
  +\tau^{(k_1, k_2, t_1, t_2)}_{n+2}\tau^{(k_1, k_2+1, t_1+1, t_2)}_n.
\end{equation*}

\paragraph{For equations \eqref{eq:nd2d-toda-3}}
Subtraction of \eqref{eq:st-b-4} from \eqref{eq:st-b-1} yields the relation
\begin{equation}\label{eq:st-b-7}
  z\phi^{(k_1+1, k_2, t_1, t_2)}_n(z)
  =\phi^{(k_1, k_2, t_1, t_2-1)}_{n+1}(z)+d^{(k_1, k_2, t_1, t_2-1)}_n \phi^{(k_1, k_2, t_1, t_2)}_n(z),
\end{equation}
where
\begin{equation*}
  d^{(k_1, k_2, t_1, t_2)}_n\coloneq q^{(k_1, k_2, t_1, t_2+1)}_n-\tilde e^{(k_1, k_2, t_1, t_2)}_n.
\end{equation*}
The relation \eqref{eq:st-b-7} induces
\begin{equation*}
  d^{(k_1, k_2, t_1, t_2)}_n
  =-\frac{\phi^{(k_1, k_2, t_1, t_2)}_{n+1}(0)}{\phi^{(k_1, k_2, t_1, t_2+1)}_n(0)}
  =\frac{\tau^{(k_1, k_2, t_1, t_2+1)}_n\tau^{(k_1+1, k_2, t_1, t_2)}_{n+1}}{\tau^{(k_1, k_2, t_1, t_2)}_{n+1}\tau^{(k_1+1, k_2, t_1, t_2+1)}_n}.
\end{equation*}
By using the variable $d^{(k_1, k_2, t_1, t_2)}_n$, equations~\eqref{eq:nd2d-toda-3}
are rewritten as
\begin{subequations}\label{eq:nd2d-toda-3-sf}
  \begin{gather}
    q^{(k_1, k_2, t_1, t_2+1)}_n=d^{(k_1, k_2, t_1, t_2)}_n+\tilde e^{(k_1, k_2, t_1, t_2)}_n,\label{eq:nd2d-toda-3-sf-1}\\
    q^{(k_1, k_2, t_1, t_2+1)}_n\tilde e^{(k_1+1, k_2, t_1, t_2)}_n
    =q^{(k_1, k_2, t_1, t_2)}_{n+1}\tilde e^{(k_1, k_2, t_1, t_2)}_n,\\
    d^{(k_1, k_2, t_1, t_2)}_{n+1}\tilde e^{(k_1, k_2, t_1, t_2)}_n
    =d^{(k_1, k_2, t_1, t_2)}_{n}\tilde e^{(k_1+1, k_2, t_1, t_2)}_n
  \end{gather}
  for $n=0, 1, 2, \dots$ with the boundary condition
  \begin{equation}
    d^{(k_1, k_2, t_1, t_2)}_0=q^{(k_1, k_2, t_1, t_2)}_0=\tilde q^{(k_1, k_2, t_1, t_2)}_0+s_1^{(t_1)}
  \end{equation}
  for all $k_1, k_2, t_1, t_2 \in \mathbb Z$.
\end{subequations}
Equation~\eqref{eq:nd2d-toda-3-sf-1} is transformed into the bilinear equation
\begin{equation*}
  \fl
  \tau^{(k_1, k_2, t_1, t_2)}_{n+1}\tau^{(k_1+1, k_2, t_1, t_2+1)}_{n+1}
  =\tau^{(k_1, k_2, t_1, t_2+1)}_{n+1}\tau^{(k_1+1, k_2, t_1, t_2)}_{n+1}
  +\tau^{(k_1, k_2, t_1, t_2)}_{n+2}\tau^{(k_1+1, k_2, t_1, t_2+1)}_n.
\end{equation*}

\paragraph{For equations \eqref{eq:nd2d-toda-4}}
Subtraction of \eqref{eq:st-b-4} from \eqref{eq:st-b-2} yields the relation
\begin{equation}\label{eq:st-b-8}
  \phi^{(k_1, k_2-1, t_1, t_2)}_{n+1}(z)
  =\phi^{(k_1, k_2, t_1, t_2-1)}_{n+1}(z)+f^{(k_1, k_2-1, t_1, t_2-1)}_n \phi^{(k_1, k_2, t_1, t_2)}_n(z),
\end{equation}
where
\begin{equation*}
  f^{(k_1, k_2, t_1, t_2)}_n\coloneq e^{(k_1, k_2, t_1, t_2+1)}_n-\tilde e^{(k_1, k_2+1, t_1, t_2)}_n.
\end{equation*}
Since the same discussion for another one of the pair of the monic biorthogonal polynomial sequences
$\{\psi^{(k_1, k_2, t_1, t_2)}_n\}_{n=0}^\infty$ leads us to
the ``dual'' version of the bilinear equation~\eqref{eq:nd2d-toda-1-bilinear}
\begin{equation}\label{eq:nd2d-toda-1-bilinear-dual}
  \fl
  \tau^{(k_1, k_2, t_1, t_2+1)}_{n+1}\tau^{(k_1, k_2+1, t_1, t_2)}_n
  =\tau^{(k_1, k_2, t_1, t_2+1)}_n\tau^{(k_1, k_2+1, t_1, t_2)}_{n+1}
  -s_2^{(t_2)}\tau^{(k_1, k_2, t_1, t_2)}_{n+1}\tau^{(k_1, k_2+1, t_1, t_2+1)}_n,
\end{equation}
the relation~\eqref{eq:st-b-8} induces
\begin{flalign*}
  &\phantom{{}={}}f^{(k_1, k_2, t_1, t_2)}_n&\\
  &=\frac{\mathcal B^{(k_1, k_2, t_1, t_2+1)}[\phi^{(k_1, k_2, t_1, t_2+1)}_{n+1}(z), z^{n+1}]-\mathcal B^{(k_1, k_2+1, t_1, t_2)}[\phi^{(k_1, k_2+1, t_1, t_2)}_{n+1}(z), z^{n}(z-s_2^{(t_2)})]}{\mathcal B^{(k_1, k_2+1, t_1, t_2+1)}[\phi^{(k_1, k_2+1, t_1, t_2+1)}_n(z), z^n]}&\\
  &=\frac{(\tau^{(k_1, k_2, t_1, t_2+1)}_{n+2}\tau^{(k_1, k_2+1, t_1, t_2)}_{n+1}-\tau^{(k_1, k_2, t_1, t_2+1)}_{n+1}\tau^{(k_1, k_2+1, t_1, t_2)}_{n+2})\tau^{(k_1, k_2+1, t_1, t_2+1)}_{n}}{\tau^{(k_1, k_2, t_1, t_2+1)}_{n+1}\tau^{(k_1, k_2+1, t_1, t_2)}_{n+1}\tau^{(k_1, k_2+1, t_1, t_2+1)}_{n+1}}&\\
  &=-s_2^{(t_2)} \frac{\tau^{(k_1, k_2, t_1, t_2)}_{n+2}\tau^{(k_1, k_2+1, t_1, t_2+1)}_{n}}{\tau^{(k_1, k_2, t_1, t_2+1)}_{n+1}\tau^{(k_1, k_2+1, t_1, t_2)}_{n+1}}.&
\end{flalign*}
By using the variable $f^{(k_1, k_2, t_1, t_2)}_n$, equations~\eqref{eq:nd2d-toda-4}
are rewritten as
\begin{subequations}\label{eq:nd2d-toda-4-sf}
  \begin{gather}
    e^{(k_1, k_2, t_1, t_2+1)}_n=f^{(k_1, k_2, t_1, t_2)}_n+\tilde e^{(k_1, k_2+1, t_1, t_2)}_n,\label{eq:nd2d-toda-4-sf-1}\\
    f^{(k_1, k_2, t_1, t_2)}_{n+1}\tilde e^{(k_1, k_2+1, t_1, t_2)}_n
    =f^{(k_1, k_2, t_1, t_2)}_{n}\tilde e^{(k_1, k_2, t_1, t_2)}_{n+1},\\
    e^{(k_1, k_2, t_1, t_2+1)}_n \tilde e^{(k_1, k_2, t_1, t_2)}_{n+1}
    =e^{(k_1, k_2, t_1, t_2)}_{n+1} \tilde e^{(k_1, k_2+1, t_1, t_2)}_{n}
  \end{gather}
  for $n=0, 1, 2, \dots$ with the boundary condition
  \begin{flalign}
    f^{(k_1, k_2, t_1, t_2)}_0
    &=-s_2^{(t_2)}e^{(k_1, k_2, t_1, t_2)}_0 \frac{\tau^{(k_1, k_2, t_1, t_2)}_1}{\tau^{(k_1, k_2, t_1, t_2+1)}_1}
    =-s_2^{(t_2)}e^{(k_1, k_2, t_1, t_2)}_0 \frac{\mu^{(t_1, t_2)}_{k_1, k_2}}{\mu^{(t_1, t_2)}_{k_1, k_2+1}-s_2^{(t_2)}\mu^{(t_1, t_2)}_{k_1, k_2}}&\nonumber\\
    &=\frac{-s_2^{(t_2)}e^{(k_1, k_2, t_1, t_2)}_0 }{\tau^{(k_1, k_2+1, t_1, t_2)}_1/\tau^{(k_1, k_2, t_1, t_2)}_1-s_2^{(t_2)}}&
  \end{flalign}
\end{subequations}
for all $k_1$, $k_2$, $t_1$ and $t_2$.
Equation~\eqref{eq:nd2d-toda-4-sf-1} is transformed into
the bilinear equation~\eqref{eq:nd2d-toda-1-bilinear-dual}.

\paragraph{For equations \eqref{eq:nd2d-toda-5}}
Subtraction of \eqref{eq:st-b-4} from \eqref{eq:st-b-3} yields the relations
\begin{equation}\label{eq:st-b-9}
  \fl
  (z-s_1^{(t_1)})\phi^{(k_1, k_2, t_1+1, t_2)}_n(z)
  =\phi^{(k_1, k_2, t_1, t_2-1)}_{n+1}(z)+g^{(k_1, k_2, t_1, t_2-1)}_n\phi^{(k_1, k_2, t_1, t_2)}_n(z),
\end{equation}
where
\begin{equation*}
  g^{(k_1, k_2, t_1, t_2)}_n\coloneq\tilde q^{(k_1, k_2, t_1, t_2+1)}_n-\tilde e^{(k_1, k_2, t_1, t_2)}_n.
\end{equation*}
The relation~\eqref{eq:st-b-9} induces
\begin{equation*}
  g^{(k_1, k_2, t_1, t_2)}_n
  =-\frac{\phi^{(k_1, k_2, t_1, t_2)}_{n+1}(s_1^{(t_1)})}{\phi^{(k_1, k_2, t_1, t_2+1)}_{n}(s_1^{(t_1)})}
  =\frac{\tau^{(k_1, k_2, t_1, t_2+1)}_{n}\tau^{(k_1, k_2, t_1+1, t_2)}_{n+1}}{\tau^{(k_1, k_2, t_1, t_2)}_{n+1}\tau^{(k_1, k_2, t_1+1, t_2+1)}_{n}}.
\end{equation*}
By using the variable $g^{(k_1, k_2, t_1, t_2)}_n$, equations~\eqref{eq:nd2d-toda-5}
are rewritten as
\begin{subequations}\label{eq:nd2d-toda-5-sf}
  \begin{gather}
    \tilde q^{(k_1, k_2, t_1, t_2+1)}_n=g^{(k_1, k_2, t_1, t_2)}_n+\tilde e^{(k_1, k_2, t_1, t_2)}_n,\label{eq:nd2d-toda-5-sf-1}\\
    \tilde q^{(k_1, k_2, t_1, t_2+1)}_n g^{(k_1, k_2, t_1, t_2)}_{n+1}
    =\tilde q^{(k_1, k_2, t_1, t_2)}_{n+1} g^{(k_1, k_2, t_1, t_2)}_n,\\
    \tilde q^{(k_1, k_2, t_1, t_2+1)}_n\tilde e^{(k_1, k_2, t_1+1, t_2)}_n
    =\tilde q^{(k_1, k_2, t_1, t_2)}_{n+1}\tilde e^{(k_1, k_2, t_1, t_2)}_n
  \end{gather}
  for $n=0, 1, 2, \dots$ with the boundary condition
  \begin{equation}
    g^{(k_1, k_2, t_1, t_2)}_0=q^{(k_1, k_2, t_1, t_2)}_0-s_1^{(t_1)}=\tilde q^{(k_1, k_2, t_1, t_2)}_0
  \end{equation}
  for all $k_1, k_2, t_1, t_2 \in \mathbb Z$.
\end{subequations}
Equation \eqref{eq:nd2d-toda-5-sf-1} is transformed into
the bilinear equation
\begin{equation*}
  \fl
  \tau^{(k_1, k_2, t_1, t_2)}_{n+1}\tau^{(k_1, k_2, t_1+1, t_2+1)}_{n+1}
  =\tau^{(k_1, k_2, t_1, t_2+1)}_{n+1}\tau^{(k_1, k_2, t_1+1, t_2)}_{n+1}
  +\tau^{(k_1, k_2, t_1, t_2)}_{n+2}\tau^{(k_1, k_2, t_1+1, t_2+1)}_n.
\end{equation*}

In this paper, we call the system of \eqref{eq:d2dtoda},
\eqref{eq:nd2d-toda-1-sf}, \eqref{eq:nd2d-toda-2-sf},
\eqref{eq:nd2d-toda-3-sf}, \eqref{eq:nd2d-toda-4-sf} and
\eqref{eq:nd2d-toda-5-sf} the nd-2D-Toda lattice.

\section{\boldmath $(M, 1)$-reduction}\label{sec:m-1-reduction}

In this section, we consider a special case of the chain of
the monic biorthogonal polynomials:
the bilinear forms satisfy the condition $\mathcal B^{(k_1+M, k_2, t_1, t_2)}=\mathcal B^{(k_1, k_2+1, t_1, t_2)}$ for all $k_1$, $k_2$, $t_1$ and $t_2$, where $M$ is a positive integer.
The condition is equivalent to
\begin{equation*}
  \mathcal B^{(k_1, k_2, t_1, t_2)}[z^{i+M}, z^{j}]=\mathcal B^{(k_1, k_2, t_1, t_2)}[z^{i}, z^{j+1}],\quad
  i, j=0, 1, 2, \dots.
\end{equation*}

\subsection{Nonautonomous discrete semi-infinite hungry Toda lattice}
Introduce a new linear functional
$\mathcal L^{(k_1, k_2, t_1, t_2)}\colon \mathbb C[z] \to \mathbb C$ by
\begin{equation*}
  \mathcal L^{(k_1, k_2, t_1, t_2)}[z^i]\coloneq\mathcal B^{(k_1, k_2, t_1, t_2)}[z^i, 1],\quad
  i=0, 1, 2, \dots.
\end{equation*}
Since the relation
\begin{equation*}
  \mathcal L^{(k_1, k_2, t_1, t_2)}[z^{i+Mj}]
  =\mathcal B^{(k_1, k_2, t_1, t_2)}[z^{i+Mj}, 1]
  =\mathcal B^{(k_1, k_2, t_1, t_2)}[z^{i}, z^j]
\end{equation*}
holds,
the pair of the monic biorthogonal polynomial sequences $\{\phi^{(k_1, k_2, t_1, t_2)}_n(z)\}_{n=0}^\infty$
and $\{\psi^{(k_1, k_2, t_1, t_2)}_n(z)\}_{n=0}^\infty$ with respect to $\mathcal B^{(k_1, k_2, t_1, t_2)}$
satisfy ``$(M, 1)$-biorthogonal relation'' with respect to $\mathcal L^{(k_1, k_2, t_1, t_2)}$:
\begin{multline*}
  \mathcal L^{(k_1, k_2, t_1, t_2)}[\phi^{(k_1, k_2, t_1, t_2)}_m(z)\psi^{(k_1, k_2, t_1, t_2)}_n(z^M)]
  =\frac{\tau^{(k_1, k_2, t_1, t_2)}_{n+1}}{\tau^{(k_1, k_2, t_1, t_2)}_n}\delta_{m, n},\\
  m, n=0, 1, 2, \dots.
\end{multline*}

Hereafter, we will fix $k_2$ and $t_1$ to zero
and consider only the time variables $k_1$ and $t_2$;
we will simply write $\mathcal L^{(k, t)}$ and $\phi^{(k, t)}_n(z)$ instead of
$\mathcal L^{(k, 0, 0, t)}$ and $\phi^{(k, 0, 0, t)}_n(z)$, respectively.
We will also omit $k_2$ and $t_1$ for all the other variables in the same manner.
Then, the determinant representation of $\phi^{(k, t)}_n(z)$ is given by
\begin{multline*}
  \fl
  \phi^{(k, t)}_0(z)=1,\quad
  \phi^{(k, t)}_n(z)=\frac{1}{\tau^{(k, t)}_n}
  \begin{vmatrix}
    \mu^{(t)}_{k} & \mu^{(t)}_{k+M} & \dots & \mu^{(t)}_{k+M(n-1)} & 1\\
    \mu^{(t)}_{k+1} & \mu^{(t)}_{k+1+M} & \dots & \mu^{(t)}_{k+1+M(n-1)} & z\\
    \vdots & \vdots & & \vdots & \vdots\\
    \mu^{(t)}_{k+n-1} & \mu^{(t)}_{k+n-1+M} & \dots & \mu^{(t)}_{k+n-1+M(n-1)} & z^{n-1}\\
    \mu^{(t)}_{k+n} & \mu^{(t)}_{k+n+M} & \dots & \mu^{(t)}_{k+n+M(n-1)} & z^{n}
  \end{vmatrix},\\
  n=1, 2, 3, \dots
\end{multline*}
where
\begin{alignat}{2}
  &\mu^{(t)}_m\coloneq\mathcal{L}^{(0, t)}[z^m],&\quad& m=0, 1, 2, \dots,\nonumber\\
  &\tau^{(k, t)}_0\coloneq1,\quad \tau^{(k, t)}_n\coloneq|\mu^{(t)}_{k+i+Mj}|_{i, j=0}^{n-1},&& n=1, 2, 3, \dots.\label{eq:m1-tau-def}
\end{alignat}
Note that the moment $\mu^{(t)}_m$ satisfies the relations
\begin{equation*}
  \mathcal L^{(k, t)}[z^m]=\mu^{(t)}_{k+m},\quad
  \mu^{(t+1)}_{m}=\mu^{(t)}_{m+M}-s^{(t)}\mu^{(t)}_m,
\end{equation*}
where we simply write $s^{(t)}$ instead of $s_2^{(t)}$.

From the discussion in \sref{sec:biorth-polyn},
the monic $(M, 1)$-biorthogonal polynomials $\{\phi^{(k, t)}_n(z)\}_{n=0}^\infty$ satisfy
the following relations
\begin{subequations}\label{eq:st-m1}
  \begin{align}
    z\phi^{(k+1, t)}_n(z)
    &=\phi^{(k, t)}_{n+1}(z)+q^{(k, t)}_n\phi^{(k, t)}_n(z),\label{eq:st-m1-1}\\
    \phi^{(k-M, t)}_{n+1}(z)
    &=\phi^{(k, t)}_{n+1}(z)+e^{(k-M, t)}_n\phi^{(k, t)}_n(z),\label{eq:st-m1-2}\\
    \phi^{(k, t-1)}_{n+1}(z)
    &=\phi^{(k, t)}_{n+1}(z)+\tilde e^{(k, t-1)}_n \phi^{(k, t)}_n(z),\label{eq:st-m1-3}\\
    z\phi^{(k+1, t)}_n(z)
    &=\phi^{(k, t-1)}_{n+1}(z)+d^{(k, t-1)}_n \phi^{(k, t)}_n(z),\\
    \phi^{(k-M, t)}_{n+1}(z)
    &=\phi^{(k, t-1)}_{n+1}(z)+f^{(k-M, t-1)}_n \phi^{(k, t)}_n(z),
  \end{align}
\end{subequations}
where
\begin{subequations}\label{eq:ndhtoda-tau}
\begin{xxalignat}{7}
  &q^{(k, t)}_n=\frac{\tau^{(k, t)}_n \tau^{(k+1, t)}_{n+1}}{\tau^{(k, t)}_{n+1}\tau^{(k+1, t)}_n},&\quad&
  e^{(k, t)}_n=\frac{\tau^{(k, t)}_{n+2} \tau^{(k+M, t)}_n}{\tau^{(k, t)}_{n+1}\tau^{(k+M, t)}_{n+1}},&&
  \tilde e^{(k, t)}_n=\frac{\tau^{(k, t)}_{n+2} \tau^{(k, t+1)}_n}{\tau^{(k, t)}_{n+1}\tau^{(k, t+1)}_{n+1}},\\
  &d^{(k, t)}_n=\frac{\tau^{(k, t+1)}_n \tau^{(k+1, t)}_{n+1}}{\tau^{(k, t)}_{n+1}\tau^{(k+1, t+1)}_n},&\quad&
  f^{(k, t)}_n=-s^{(t)}\frac{\tau^{(k, t)}_{n+2}\tau^{(k+M, t+1)}_n}{\tau^{(k, t+1)}_{n+1}\tau^{(k+M, t)}_{n+1}}.
\end{xxalignat}
\end{subequations}
We omitted the variable $\tilde q^{(k_1, k_2, t_1, t_2)}_n$ and
its related relations and variables,
because we will not use them in the subsequent discussion.
The compatibility conditions for \eqref{eq:st-m1} give the recurrence relations
\begin{subequations}\label{eq:ndhtoda-sf}
\begin{alignat}{2}
  &q^{(k, t+1)}_n=d^{(k, t)}_n+\tilde e^{(k, t)}_n,&\quad
  &e^{(k, t+1)}_n=f^{(k, t)}_n+\tilde e^{(k+M, t)}_n,\\
  &d^{(k, t)}_{n+1}=d^{(k, t)}_n\frac{q^{(k, t)}_{n+1}}{q^{(k, t+1)}_n},&\quad
  &f^{(k, t)}_{n+1}=f^{(k, t)}_n\frac{e^{(k, t)}_{n+1}}{e^{(k, t+1)}_n},\\
  &\tilde e^{(k+1, t)}_n=\tilde e^{(k, t)}_n\frac{q^{(k, t)}_{n+1}}{q^{(k, t+1)}_n},&\quad
  &\tilde e^{(k, t)}_{n+1}=\tilde e^{(k+M, t)}_n\frac{e^{(k, t)}_{n+1}}{e^{(k, t+1)}_n}
\end{alignat}
for $n=0, 1, 2, \dots$ with the boundary condition
\begin{equation}\label{eq:ndhtoda-sf-bc}
  \fl
  d^{(k, t)}_0=q^{(k, t)}_0,\quad
  f^{(k, t)}_0=\frac{-e^{(k, t)}_0 s^{(t)}}{\prod_{j=0}^{M-1}q^{(k+j, t)}_0-s^{(t)}},\quad
  \tilde e^{(k, t)}_0=\frac{e^{(k, t)}_0\prod_{j=0}^{M-1}q^{(k+j, t)}_0}{\prod_{j=0}^{M-1}q^{(k+j, t)}_0-s^{(t)}}
\end{equation}
for all $k, t \in \mathbb Z$.
\end{subequations}

We should remark that, if $s^{(t)}=0$ for all $t$,
then $f^{(k, t)}_n=0$, $e^{(k, t)}_n=\tilde e^{(k, t)}_n$
and $\mathcal{L}^{(k, t+1)}=\mathcal{L}^{(k+M, t)}$
hold for all $k$, $t$ and $n$.
Therefore, equations~\eqref{eq:ndhtoda-sf} are reduced to
\begin{gather*}
  q^{(k+M)}_n=d^{(k)}_n+e^{(k)}_n,\quad
  e^{(k+1)}_n=e^{(k)}_n\frac{q^{(k)}_{n+1}}{q^{(k+M)}_n},\quad
  d^{(k)}_{n+1}=d^{(k)}_n\frac{q^{(k)}_{n+1}}{q^{(k+M)}_n},
\end{gather*}
where we omitted the time variable $t$.
Elimination of $d^{(k)}_n$ yields
\begin{equation}\label{eq:dhtoda}
  q^{(k+M)}_n+e^{(k+1)}_{n-1}=q^{(k)}_{n}+e^{(k)}_{n},\quad
  q^{(k+M)}_n e^{(k+1)}_n=q^{(k)}_{n+1} e^{(k)}_n.
\end{equation}
The system~\eqref{eq:dhtoda} is called the discrete hungry Toda lattice,
which is the reason why we call
the system~\eqref{eq:ndhtoda-sf} the ndh-Toda lattice.

\subsection{Nonautonomous discrete finite hungry Toda lattice}
In this subsection, we consider
the ndh-Toda lattice~\eqref{eq:ndhtoda-sf} with
the finite lattice boundary condition
\begin{equation}\label{eq:finite-lattice-BC}
  \tau^{(k, t)}_n=0\quad\text{if $n>N$}
\end{equation}
for all $k, t \in \mathbb Z$,
where the lattice size $N$ is a positive integer.
By imposing the boundary condition,
the pair of the semi-infinite biorthogonal polynomial sequences
$\{\phi^{(k, t)}_n(z)\}_{n=0}^\infty$ and
$\{\psi^{(k, t)}_n(z)\}_{n=0}^\infty$
are reduced to the pair of finite polynomial sequences
$\{\phi^{(k, t)}_n(z)\}_{n=0}^N$ and
$\{\psi^{(k, t)}_n(z)\}_{n=0}^N$.
Since
\begin{equation}\label{eq:ndhtoda-finite-cond}
  e^{(k, t)}_{N-1}=\tilde e^{(k, t)}_{N-1}=0
\end{equation}
holds from \eqref{eq:ndhtoda-tau}, the spectral transformations \eqref{eq:st-m1-2} and
\eqref{eq:st-m1-3} for $n=N-1$ read
\begin{equation}\label{eq:phi-N}
  \phi^{(k+M, t)}_N(z)=\phi^{(k, t+1)}_N(z)=\phi^{(k, t)}_N(z).
\end{equation}
The ``dual'' relation
\begin{equation*}
  \psi^{(k+1, t)}_N(z)=\psi^{(k, t)}_N(z)
\end{equation*}
also holds.

By using the $N\times N$ bidiagonal matrices
\begin{gather*}
  \fl
  L^{(k, t)}\coloneq
  \begin{pmatrix}
    1\\
    e^{(k, t)}_0 & 1\\
    & e^{(k, t)}_1 & \ddots\\
    && \ddots & \ddots\\
    &&& e^{(k, t)}_{N-2} & 1
  \end{pmatrix},\quad
  R^{(k, t)}\coloneq
  \begin{pmatrix}
    q^{(k, t)}_0 & 1\\
    & q^{(k, t)}_1 & 1\\
    && \ddots & \ddots\\
    &&& \ddots & 1\\
    &&&& q^{(k, t)}_{N-1}
  \end{pmatrix},\\
  \fl
  \tilde L^{(k, t)}\coloneq
  \begin{pmatrix}
    1\\
    \tilde e^{(k, t)}_0 & 1\\
    & \tilde e^{(k, t)}_1 & \ddots\\
    && \ddots & \ddots\\
    &&& \tilde e^{(k, t)}_{N-2} & 1
  \end{pmatrix},
\end{gather*}
and the $N$-dimensional vectors
\begin{equation*}
  \bm\phi^{(k, t)}(z)\coloneq
  \begin{pmatrix}
    \phi^{(k, t)}_0(z)\\
    \phi^{(k, t)}_1(z)\\
    \vdots\\
    \phi^{(k, t)}_{N-1}(z)
  \end{pmatrix},\quad
  \bm\phi^{(k, t)}_N(z)\coloneq
  \begin{pmatrix}
    0\\
    \vdots\\
    0\\
    \phi^{(k, t)}_{N}(z)
  \end{pmatrix},
\end{equation*}
the spectral transformations~\eqref{eq:st-m1-1}--\eqref{eq:st-m1-3} with
the finite lattice boundary condition~\eqref{eq:finite-lattice-BC} are
written as
\begin{align*}
  z\bm\phi^{(k+1, t)}(z)&=R^{(k, t)}\bm \phi^{(k, t)}(z)+\bm\phi^{(k, t)}_N(z),\\
  \bm\phi^{(k-M, t)}(z)&=L^{(k-M, t)}\bm\phi^{(k, t)}(z),\\
  \bm\phi^{(k, t-1)}(z)&=\tilde L^{(k, t-1)}\bm\phi^{(k, t)}(z),
\end{align*}
respectively.
Hence, we have
\begin{align*}
  z\bm\phi^{(k+1, t)}(z)
  &=\tilde L^{(k+1, t)}R^{(k, t+1)}\bm \phi^{(k, t+1)}(z)+\tilde L^{(k+1, t)}\bm \phi^{(k,  t+1)}_N(z)\\
  &=R^{(k, t)}\tilde L^{(k, t)}\bm\phi^{(k, t+1)}(z)+\bm \phi^{(k, t)}_N(z),\\
  \bm\phi^{(k, t)}(z)
  &=\tilde L^{(k, t)}L^{(k, t+1)}\bm\phi^{(k+M, t+1)}(z)\\
  &=L^{(k, t)}\tilde L^{(k+M, t)}\bm\phi^{(k+M, t+1)}(z).
\end{align*}
Note that, from \eqref{eq:phi-N},
$\tilde L^{(k+1, t)}\bm\phi^{(k, t+1)}_N(z)=\bm\phi^{(k, t+1)}_N(z)=\bm\phi^{(k, t)}_N(z)$
holds.
Therefore, the compatibility conditions are written in the same matrix form
also for the finite lattice case:
\begin{equation}\label{eq:finite-Lax-form}
  \tilde L^{(k+1, t)}R^{(k, t+1)}=R^{(k, t)}\tilde L^{(k, t)},\quad
  \tilde L^{(k, t)}L^{(k, t+1)}=L^{(k, t)}\tilde L^{(k+M, t)}.
\end{equation}

Now consider upper Hessenberg matrices of the form
\begin{equation}\label{eq:upper-Hessenberg}
  H^{(k, t)}\coloneq L^{(k, t)}R^{(k+M-1, t)}R^{(k+M-2, t)}\dots R^{(k, t)}.
\end{equation}
By using the matrix relations~\eqref{eq:finite-Lax-form}, we find
\begin{align*}
  \tilde L^{(k, t)}H^{(k, t+1)}
  &=\tilde L^{(k, t)}L^{(k, t+1)}R^{(k+M-1, t+1)}R^{(k+M-2, t+1)}\dots R^{(k, t+1)}\\
  &=L^{(k, t)}\tilde L^{(k+M, t)}R^{(k+M-1, t+1)}R^{(k+M-2, t+1)}\dots R^{(k, t+1)}\\
  &=L^{(k, t)}R^{(k+M-1, t)}\tilde L^{(k+M-1, t)}R^{(k+M-2, t+1)}\dots R^{(k, t+1)}\\
  &=\dots\\
  &=L^{(k, t)}R^{(k+M-1, t)}R^{(k+M-2, t)}\dots \tilde L^{(k+1, t)}R^{(k, t+1)}\\
  &=L^{(k, t)}R^{(k+M-1, t)}R^{(k+M-2, t)}\dots R^{(k, t)}\tilde L^{(k, t)}\\
  &=H^{(k, t)}\tilde L^{(k, t)}.
\end{align*}
Since $\tilde L^{(k, t)}$ is regular, this implies that
the two upper Hessenberg matrices
$H^{(k, t+1)}$ and $H^{(k, t)}$ are similar.
Therefore, we can say that the ndh-Toda lattice~\eqref{eq:ndhtoda-sf}
with the finite lattice boundary condition~\eqref{eq:finite-lattice-BC}
gives recurrence relations for computing iterations of similarity transformations
of the upper Hessenberg matrices.
Notice that the recurrence relations discussed here are essentially same as
the eigenvalue algorithm for totally nonnegative Hessenberg matrices
proposed by Fukuda \etal~\cite[Algorithm~1]{fukuda2012eam}.
Therefore, we can say that we are investigating
another theoretical aspect of the eigenvalue algorithm
from the viewpoint of discrete integrable systems and
biorthogonal polynomials.

To construct solutions for the system above,
let us consider the theory of ``discrete'' $(M, 1)$-biorthogonal polynomials
$\{\phi^{(k, t)}_n(z)\}_{n=0}^N$ and
$\{\psi^{(k, t)}_n(z)\}_{n=0}^N$ with respect to
$\mathcal L^{(k, t)}$;
i.e. the polynomials satisfy
\begin{gather}
  \begin{multlined}[t][\dimexpr\textwidth-\mathindent-1em]
    \mathcal L^{(k, t)}[z^{mM}\phi^{(k, t)}_n(z)]
    =\mathcal L^{(k, t)}[z^m\psi^{(k, t)}_n(z^M)]
    =\frac{\tau^{(k, t)}_{n+1}}{\tau^{(k, t)}_{n}}\delta_{m, n},\\
    n=0, 1, \dots, N-1,\quad m=0, 1, \dots, n,
  \end{multlined}\nonumber\\
  \mathcal L^{(k, t)}[z^{mM}\phi^{(k, t)}_N(z)]
  =\mathcal L^{(k, t)}[z^m\psi^{(k, t)}_N(z^M)]
  =0,\quad m=0, 1, 2, \dots,\label{eq:d-m1-biorth-rel}
\end{gather}
for all $k, t \in \mathbb Z$.
We can proof the following theorem, which is an analogue of Gauss quadrature
for orthogonal polynomials.

\begin{theorem}\label{th:analog-Gauss}
  Suppose that the monic polynomial
  $\psi^{(k, t)}_N(z)$ has simple zeros
  $z^{(t)}_{0}, z^{(t)}_{1}, \dots, z^{(t)}_{N-1}$.
  Then, there exist constants $w^{(k, t)}_{r, \nu} \in \mathbb C$,
  $r=0, 1, \dots, N-1$ and $\nu=0, 1, \dots, M-1$,
  satisfying
  \begin{equation*}
    \mathcal L^{(k, t)}[\pi(z)]
    =\sum_{r=0}^{N-1}\sum_{\nu=0}^{M-1} w^{(k, t)}_{r, \nu} \pi\left((z^{(t)}_{r})^{1/M}\rme^{-2\pi\rmi\nu/M}\right)
  \end{equation*}
  for all $\pi(z) \in \mathbb C[z]$, where $(z^{(t)}_{r})^{1/M}$ denotes
  one of the $M$th roots of $z^{(t)}_{r}$.
\end{theorem}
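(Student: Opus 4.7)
The plan is to prove this in the spirit of the classical Gauss quadrature formula, playing the role of the orthogonal polynomial of degree $N$ with $\psi^{(k, t)}_N(z^M)$. First I would note that $\psi^{(k, t)}_N(z^M)$ has degree $NM$ and that a complex number $\zeta$ is one of its zeros if and only if $\zeta^M=z^{(t)}_r$ for some $r$; hence its zero set is exactly
\begin{gather*}
  \{\zeta_{r, \nu}\coloneq(z^{(t)}_r)^{1/M}\rme^{-2\pi\rmi\nu/M}\mid 0\le r\le N-1,\ 0\le \nu\le M-1\}.
\end{gather*}
Simplicity of the $z^{(t)}_r$ together with the implicit nondegeneracy $z^{(t)}_r\ne 0$ (required so that the $M$ roots of $z^{(t)}_r$ do not collapse) ensures that these $NM$ nodes are pairwise distinct, and hence that $\psi^{(k, t)}_N(z^M)$ is a polynomial of degree $NM$ with $NM$ simple zeros.

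Next, I would read the second equality in the biorthogonal relation~\eqref{eq:d-m1-biorth-rel} as stating that $\mathcal L^{(k, t)}[z^m\psi^{(k, t)}_N(z^M)]=0$ for every $m=0, 1, 2, \dots$. For an arbitrary $\pi(z)\in\mathbb C[z]$, I perform the Euclidean division
\begin{gather*}
  \pi(z)=q(z)\psi^{(k, t)}_N(z^M)+\rho(z), \qquad \deg\rho<NM.
\end{gather*}
Expanding $q(z)=\sum_m q_m z^m$ and applying $\mathcal L^{(k, t)}$ termwise yields $\mathcal L^{(k, t)}[q(z)\psi^{(k, t)}_N(z^M)]=0$, so $\mathcal L^{(k, t)}[\pi(z)]=\mathcal L^{(k, t)}[\rho(z)]$. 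Likewise, because $\psi^{(k, t)}_N(\zeta_{r, \nu}^M)=0$, we have $\pi(\zeta_{r, \nu})=\rho(\zeta_{r, \nu})$ at every node.

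Finally, since $\deg\rho\le NM-1$ and the $NM$ nodes $\zeta_{r, \nu}$ are distinct, Lagrange interpolation supplies unique polynomials $\ell_{r, \nu}(z)$ of degree $NM-1$ with $\ell_{r, \nu}(\zeta_{r', \nu'})=\delta_{r, r'}\delta_{\nu, \nu'}$, so that $\rho(z)=\sum_{r, \nu}\rho(\zeta_{r, \nu})\ell_{r, \nu}(z)$. Setting $w^{(k, t)}_{r, \nu}\coloneq\mathcal L^{(k, t)}[\ell_{r, \nu}(z)]$ and assembling the identities above gives
\begin{gather*}
  \mathcal L^{(k, t)}[\pi(z)]=\mathcal L^{(k, t)}[\rho(z)]=\sum_{r, \nu}\rho(\zeta_{r, \nu})w^{(k, t)}_{r, \nu}=\sum_{r, \nu}\pi(\zeta_{r, \nu})w^{(k, t)}_{r, \nu},
\end{gather*}
which is the desired formula. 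The main conceptual point is to recognise \eqref{eq:d-m1-biorth-rel} as the analogue of the vanishing of inner products against the degree-$N$ orthogonal polynomial in the classical Gauss theory, and to notice that the substitution $z\mapsto z^M$ multiplies the number of interpolation nodes from $N$ to $NM$; once this is in place, the rest of the argument is routine interpolation bookkeeping.
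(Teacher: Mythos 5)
Your proposal is correct and follows essentially the same route as the paper: both arguments use that the $NM$ points $(z^{(t)}_r)^{1/M}\rme^{-2\pi\rmi\nu/M}$ are the simple zeros of $\psi^{(k,t)}_N(z^M)$, kill the multiple of $\psi^{(k,t)}_N(z^M)$ via the relation $\mathcal L^{(k,t)}[z^m\psi^{(k,t)}_N(z^M)]=0$, and take $w^{(k,t)}_{r,\nu}$ to be $\mathcal L^{(k,t)}$ applied to the Lagrange basis polynomials. The only cosmetic difference is that you divide $\pi$ by $\psi^{(k,t)}_N(z^M)$ first and interpolate the remainder, whereas the paper interpolates $\pi$ directly and then observes the difference is divisible by $\psi^{(k,t)}_N(z^M)$; your explicit remark that $z^{(t)}_r\ne 0$ is needed for the nodes to be distinct is a sensible clarification of a hypothesis the paper leaves implicit.
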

\begin{proof}
  Let us consider the Lagrange interpolating polynomial
  \begin{equation*}
    L(z)\coloneq\sum_{r=0}^{N-1}\sum_{\nu=0}^{M-1} \frac{\pi\left((z^{(t)}_{r})^{1/M}\rme^{-2\pi\rmi\nu/M}\right)\psi^{(k, t)}_N(z^M)}{\displaystyle\left.\left(\psi^{(k, t)}_N(z^M)\right)'\right|_{z=(z^{(t)}_{r})^{1/M}\rme^{-2\pi\rmi\nu/M}}\left(z-(z^{(t)}_{r})^{1/M}\rme^{-2\pi\rmi\nu/M}\right)},
  \end{equation*}
  where $'$ indicates the differentiation with respect to $z$.
  Note that we can factorize the polynomial $\psi^{(k, t)}_N(z^M)$ into
  \begin{equation*}
    \psi^{(k, t)}_N(z^M)=\prod_{r=0}^{N-1}\prod_{\nu=0}^{M-1} \left(z-(z^{(t)}_{r})^{1/M}\rme^{-2\pi\rmi\nu/M}\right).
  \end{equation*}
  Hence $L(z)$ is a polynomial at most degree $MN-1$ satisfying
  \begin{multline*}
    L\left((z^{(t)}_{r})^{1/M}\rme^{-2\pi\rmi\nu/M}\right)
    =\pi\left((z^{(t)}_{r})^{1/M}\rme^{-2\pi\rmi\nu/M}\right),\\
    r=0, 1, \dots, N-1,\quad \nu=0, 1, \dots, M-1.
  \end{multline*}
  This implies the existence of a polynomial $P(z)$ (may be zero, if $\deg \pi(z)<MN$)
  satisfying
  \begin{equation*}
    \pi(z)-L(z)=P(z)\psi^{(k, t)}_N(z^M).
  \end{equation*}
  By using the relation above and the biorthogonal relation~\eqref{eq:d-m1-biorth-rel},
  we find
  \begin{gather*}
    \begin{split}
      \mathcal L^{(k, t)}[\pi(z)]
      &=\mathcal L^{(k, t)}[L(z)+P(z)\psi^{(k, t)}_N(z^M)]\\
      &=\mathcal L^{(k, t)}[L(z)]\\
      &=\sum_{r=0}^{N-1}\sum_{\nu=0}^{M-1} w^{(k, t)}_{r, \nu} \pi\left((z^{(t)}_{r})^{1/M}\rme^{-2\pi\rmi\nu/M}\right),
    \end{split}\\
    w^{(k, t)}_{r, \nu}\coloneq\mathcal L^{(k, t)}\left[\frac{\psi^{(k, t)}_N(z^M)}{\displaystyle\left.\left(\psi^{(k, t)}_N(z^M)\right)'\right|_{z=(z^{(t)}_{r})^{1/M}\rme^{-2\pi\rmi\nu/M}}\left(z-(z^{(t)}_{r})^{1/M}\rme^{-2\pi\rmi\nu/M}\right)}\right],
  \end{gather*}
  which completes the proof.
\end{proof}

Theorem~\ref{th:analog-Gauss} leads us to a representation of the moment:
\begin{align}
  \mu^{(t)}_m
  &=\mathcal L^{(0, 0)}\left[z^m\prod_{\tau=0}^{t-1}(z^M-s^{(\tau)})\right]\nonumber\\
  &=\sum_{r=0}^{N-1}\sum_{\nu=0}^{M-1} w^{(0, 0)}_{r, \nu} \left(z_{r}^{1/M}\rme^{-2\pi\rmi\nu/M}\right)^m \prod_{\tau=0}^{t-1}(z_{r}-s^{(\tau)})\nonumber\\
  &=\sum_{r=0}^{N-1}\left(\sum_{\nu=0}^{M-1} w^{(0, 0)}_{r, \nu} \rme^{-2\pi \rmi m\nu/M}\right)z_{r}^{m/M}\prod_{\tau=0}^{t-1}(z_{r}-s^{(\tau)}),\label{eq:ndhtoda-moment-repr-1}
\end{align}
where $z_{r}\coloneq z^{(0)}_{r}$.
Let us introduce new constants
\begin{equation}\label{eq:weight-corr}
  w^{(m)}_{r}\coloneq\sum_{\nu=0}^{M-1} w^{(0, 0)}_{r, \nu} \rme^{-2\pi \rmi m\nu/M},\enspace
  r=0, 1, \dots, N-1,\enspace
  m=0, 1, 2, \dots.
\end{equation}
Then, the representation of the moment~\eqref{eq:ndhtoda-moment-repr-1} is rewritten as
\begin{equation}\label{eq:ndhtoda-moment-repr}
  \mu^{(t)}_m=\sum_{\nu=0}^{N-1}w^{(m)}_{r}z_{r}^{m/M}\prod_{\tau=0}^{t-1}(z_{r}-s^{(\tau)}).
\end{equation}
We should remark that $w_{r}^{(m)}=w_{r}^{(m\bmod M)}$ holds for all $m=0, 1, 2, \dots$ and
there is a one-to-one correspondence between
the constants $\{w^{(0, 0)}_{r, \nu}\}_{r=0, 1, \dots, N-1}^{\nu=0, 1, \dots, M-1}$
and $\{w^{(m)}_{r}\}_{r=0, 1, \dots, N-1}^{m=0, 1, \dots, M-1}$
via the definition~\eqref{eq:weight-corr}, that is the discrete Fourier transform.

Substituting the moment representation~\eqref{eq:ndhtoda-moment-repr}
into the determinant $\tau^{(k, t)}_n$~\eqref{eq:m1-tau-def},
we find
\begin{equation}\label{eq:tau-factorized}
  \tau^{(k, t)}_n=\det(\tilde V^{(k)}_n \mathcal D^{(k, t)} V_n),
\end{equation}
where
\begin{gather*}
  \tilde V^{(k)}_n\coloneq
  \begin{pmatrix}
    w_0^{(k)} & w_1^{(k)} & \dots & w_{N-1}^{(k)}\\
    w_0^{(k+1)} z_0^{1/M} & w_1^{(k+1)}z_1^{1/M} & \dots & w_{N-1}^{(k+1)}z_{N-1}^{1/M}\\
    \vdots & \vdots & & \vdots\\
    w_0^{(k+n-1)} z_0^{(n-1)/M} & w_1^{(k+n-1)}z_1^{(n-1)/M} & \dots & w_{N-1}^{(k+n-1)}z_{N-1}^{(n-1)/M}
  \end{pmatrix},\\
  \mathcal D^{(k, t)}\coloneq\mathop{\mathrm{diag}}\left(z_0^{k/M}\prod_{\tau=0}^{t-1}(z_0-s^{(\tau)}), \dots, z_{N-1}^{k/M}\prod_{\tau=0}^{t-1}(z_{N-1}-s^{(\tau)})\right),\\
  V_n\coloneq
  \begin{pmatrix}
    1 & z_0 & \dots & z_0^{n-1}\\
    1 & z_1 & \dots & z_1^{n-1}\\
    \vdots & \vdots && \vdots\\
    1 & z_{N-1} & \dots & z_{N-1}^{n-1}
  \end{pmatrix}.
\end{gather*}
Applying the Binet--Cauchy formula and the expansion formula for
the Vandermonde determinant to \eqref{eq:tau-factorized}, we obtain
\begin{equation}\label{eq:tau-expanded-v}
  \fl
  \tau^{(k, t)}_n=\sum_{0\le r_0<r_1<\dots<r_{n-1}\le N-1} \mathcal V^{(k)}_{r_0, r_1, \dots, r_{n-1}} \prod_{j=0}^{n-1}\left(z_{r_j}^{k/M}\prod_{\tau=0}^{t-1}(z_{r_j}-s^{(\tau)})\right)\prod_{0\le i<j\le n-1}(z_{r_j}-z_{r_i}),
\end{equation}
where
\begin{equation*}
  \fl
  \mathcal V^{(k)}_{r_0, r_1, \dots, r_{n-1}}\coloneq
  \begin{vmatrix}
    w_{r_0}^{(k)} & w_{r_1}^{(k)} & \dots & w_{r_{n-1}}^{(k)}\\
    w_{r_0}^{(k+1)} z_{r_0}^{1/M} & w_{r_1}^{(k+1)}z_{r_1}^{1/M} & \dots & w_{r_{n-1}}^{(k+1)}z_{r_{n-1}}^{1/M}\\
    \vdots & \vdots & & \vdots\\
    w_{r_0}^{(k+n-1)} z_{r_0}^{(n-1)/M} & w_{r_1}^{(k+n-1)}z_{r_1}^{(n-1)/M} & \dots & w_{r_{n-1}}^{(k+n-1)}z_{r_{n-1}}^{(n-1)/M}
  \end{vmatrix}.
\end{equation*}

Let us derive a sufficient condition for the positivity $\tau^{(k, t)}_n>0$
for all $k$, $t \in \mathbb Z$ and $n=1, 2, \dots, N$.
Hereafter, we suppose that
\begin{itemize}
\item $z_0, z_1, \dots, z_{N-1}$ are all real numbers satisfying $0<z_0<z_1<\dots<z_{N-1}$;
\item all the $M$th roots $z_0^{1/M}, z_1^{1/M}, \dots, z_{N-1}^{1/M}$ are
  chosen as real numbers;
\item the parameter $s^{(t)}$ is chosen as $s^{(t)}<z_0$ for all $t \in \mathbb Z$.
\end{itemize}
In addition, if $\mathcal V^{(k)}_{r_0, r_1, \dots, r_{n-1}}>0$ for
all $k \in \mathbb Z$ and
all $n$-tuples $(r_0, r_1, \dots, r_{n-1})$ satisfying $0\le r_0<r_1<\dots<r_{n-1}\le N-1$,
$n=1, 2, \dots, N$, then it is obvious that the conditions
are sufficient for the positivity of $\tau^{(k, t)}_n$.
Since $w_r^{(m)}=w_r^{(m \bmod M)}$
implies
$\mathcal V^{(k)}_{r_0, r_1, \dots, r_{n-1}}=\mathcal V^{(k \bmod M)}_{r_0, r_1, \dots, r_{n-1}}$,
the number of the conditions $\mathcal V^{(k)}_{r_0, r_1, \dots, r_{n-1}}>0$
is finite: there are $M\sum_{n=1}^{N} \binom{N}{n}=M(2^N-1)$ conditions.

We will rewrite the condition $\mathcal V^{(k)}_{r_0, r_1, \dots, r_{n-1}}>0$
in a simpler form.
First, if $n=1$, then $\mathcal V^{(k)}_{r_0}=w_{r_0}^{(k)}>0$;
i.e. all $w_{r}^{(m)}$ must be positive.
Next, if $n=2, 3, \dots, N$, then the elementary row-additions yield
\begin{flalign*}
  \mathcal V^{(k)}_{r_0, r_1, \dots, r_{n-1}}&=
  \begin{vmatrix}
    w_{r_0}^{(k)} & w_{r_1}^{(k)} & w_{r_2}^{(k)} & \dots & w_{r_{n-1}}^{(k)}\\
    0 & w_{r_0, r_1}^{(k+1)} & w_{r_0, r_2}^{(k+1)}& \dots & w_{r_0, r_{n-1}}^{(k+1)}\\
    0 & w_{r_0, r_1}^{(k+2)}z_{r_1}^{1/M} & w_{r_0, r_2}^{(k+2)}z_{r_2}^{1/M} & \dots & w_{r_0, r_{n-1}}^{(k+1)}z_{r_{n-1}}^{1/M}\\
    \vdots & \vdots & \vdots & & \vdots\\
    0 & w_{r_0, r_1}^{(k+n-1)}z_{r_1}^{(n-2)/M} & w_{r_0, r_2}^{(k+n-1)}z_{r_2}^{(n-2)/M} & \dots & w_{r_0, r_{n-1}}^{(k+n-1)}z_{r_{n-1}}^{(n-2)/M}
  \end{vmatrix}&\\
  &=w_{r_0}^{(k)}
  \begin{vmatrix}
    w_{r_0, r_1}^{(k+1)} & w_{r_0, r_2}^{(k+1)}& \dots & w_{r_0, r_{n-1}}^{(k+1)}\\
    w_{r_0, r_1}^{(k+2)}z_{r_1}^{1/M} & w_{r_0, r_2}^{(k+2)}z_{r_2}^{1/M} & \dots & w_{r_0, r_{n-1}}^{(k+2)}z_{r_{n-1}}^{1/M}\\
    \vdots & \vdots & & \vdots\\
    w_{r_0, r_1}^{(k+n-1)}z_{r_1}^{(n-2)/M} & w_{r_0, r_2}^{(k+n-1)}z_{r_2}^{(n-2)/M} & \dots & w_{r_0, r_{n-1}}^{(k+n-1)}z_{r_{n-1}}^{(n-2)/M}
  \end{vmatrix},&
\end{flalign*}
where
\begin{equation*}
  \fl
  w_{r_0, r_1}^{(m+1)}
  \coloneq\frac{w_{r_0}^{(m)}w_{r_1}^{(m+1)}z_{r_1}^{1/M}-w_{r_0}^{(m+1)}w_{r_1}^{(m)}z_{r_0}^{1/M}}{w_{r_0}^{(m)}}
  =\frac{1}{w_{r_0}^{(m)}}
  \begin{vmatrix}
    w_{r_0}^{(m)} & w_{r_1}^{(m)}\\
    w_{r_0}^{(m+1)}z_{r_0}^{1/M} & w_{r_1}^{(m+1)}z_{r_1}^{1/M}
  \end{vmatrix}.
\end{equation*}
In the same manner, we can show by induction on $n$ that
\begin{equation}\label{eq:vtow}
  \mathcal V_{r_0, r_1, \dots, r_{n-1}}^{(k)}=w_{r_0}^{(k)}w_{r_0, r_1}^{(k+1)}\dots w_{r_0, r_1, \dots, r_{n-1}}^{(k+n-1)},
\end{equation}
where $w_{r_0, r_1, \dots, r_{n-1}}^{(m)}$ is defined recursively by
\begin{equation}\label{eq:w-def}
  \fl
  w_{r_0, \dots, r_{n-3}, r_{n-2}, r_{n-1}}^{(m+1)}
  \coloneq\frac{1}{w_{r_0, \dots, r_{n-3}, r_{n-2}}^{(m)}}
  \begin{vmatrix}
    w_{r_0, \dots, r_{n-3}, r_{n-2}}^{(m)} & w_{r_0, \dots, r_{n-3}, r_{n-1}}^{(m)}\\
    w_{r_0, \dots, r_{n-3}, r_{n-2}}^{(m+1)}z_{r_{n-2}}^{1/M} & w_{r_0, \dots, r_{n-3}, r_{n-1}}^{(m+1)}z_{r_{n-1}}^{1/M}
  \end{vmatrix}.
\end{equation}
From equation~\eqref{eq:vtow} with the condition
$\mathcal V_{r_0, r_1, \dots, r_{n-1}}^{(k)}>0$, it is readily induced by induction on $n$
that all $w_{r_0, r_1, \dots, r_{n-1}}^{(m)}$ must be positive.
Thus we obtain the following theorem.

\begin{theorem}
  Suppose that all $w_r^{(m)}$ are positive and the relation
  \begin{equation}\label{eq:cond-positivity}
    \fl
    w_{r_0, \dots, r_{n-3}, r_{n-2}}^{(m)}w_{r_0, \dots, r_{n-3}, r_{n-1}}^{(m+1)}z_{r_{n-1}}^{1/M}
    >w_{r_0, \dots, r_{n-3}, r_{n-2}}^{(m+1)}w_{r_0, \dots, r_{n-3}, r_{n-1}}^{(m)}z_{r_{n-2}}^{1/M}
  \end{equation}
  holds for all $m=0, 1, \dots, M-1$ and all $n$-tuples $(r_0, r_1, \dots, r_{n-1})$ satisfying $0\le r_0<r_1<\dots<r_{n-1}\le N-1$, $n=2, 3, \dots, N$.
  Then,
  \begin{multline}\label{eq:tau-expanded}
    \fl
    \tau^{(k, t)}_n=\sum_{0\le r_0<r_1<\dots<r_{n-1}\le N-1} \Bigg(w_{r_0}^{(k)}w_{r_0, r_1}^{(k+1)}\dots w_{r_0, r_1, \dots, r_{n-1}}^{(k+n-1)} \\
    \times\prod_{j=0}^{n-1}\left(z_{r_j}^{k/M}\prod_{\tau=0}^{t-1}(z_{r_j}-s^{(\tau)})\right)\prod_{0\le i<j\le n-1}(z_{r_j}-z_{r_i})\Bigg)>0
  \end{multline}
  for all $k$, $t \in \mathbb Z$ and $n=1, 2, \dots, N$.
\end{theorem}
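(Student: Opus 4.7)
My plan is to reduce the claim to a positivity statement for each summand in the expansion, then verify positivity of each factor separately. By the discussion immediately preceding the theorem, combining the Binet--Cauchy expansion~\eqref{eq:tau-expanded-v} with the factorization~\eqref{eq:vtow} of $\mathcal V^{(k)}_{r_0,\ldots,r_{n-1}}$ already produces the closed form claimed on the right of~\eqref{eq:tau-expanded}. So the work reduces to showing that every summand indexed by $0\le r_0<r_1<\dots<r_{n-1}\le N-1$ is strictly positive for all $k$, $t$, and $n$.

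First, I would dispose of the $z$-dependent factors using the three standing hypotheses stated just before the theorem: $z_{r_j}^{k/M}>0$ because $z_{r_j}>0$ and its $M$th root is chosen real; $z_{r_j}-s^{(\tau)}>0$ because $s^{(\tau)}<z_0\le z_{r_j}$; and $z_{r_j}-z_{r_i}>0$ for $i<j$ by the strict ordering of the $z_r$. Thus the remaining task is to prove $w^{(m)}_{r_0,\ldots,r_{n-1}}>0$ for every $m\in\mathbb Z$ and every strictly increasing tuple of length $n\in\{1,2,\ldots,N\}$.

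I would establish this by induction on the length $n$ of the subscript tuple. The base case $n=1$ is the hypothesis that every $w^{(m)}_r$ is positive. For the inductive step, assume all quantities $w^{(m)}_{r_0,\ldots,r_{n-2}}$ at level $n-1$ are positive. Read off the definition~\eqref{eq:w-def}: the prefactor $1/w^{(m)}_{r_0,\ldots,r_{n-2}}$ is positive by the inductive hypothesis, and the $2\times 2$ determinant in the numerator is precisely the quantity whose strict positivity is asserted by~\eqref{eq:cond-positivity} applied to the tuple $(r_0,\ldots,r_{n-1})$. This yields $w^{(m+1)}_{r_0,\ldots,r_{n-1}}>0$ for every $m\ge 0$, and periodicity in $m$ modulo $M$ (inherited from $w^{(m)}_r=w^{(m\bmod M)}_r$ at level $1$ and propagated to higher levels by~\eqref{eq:w-def}) extends positivity to all $m\in\mathbb Z$.

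The main obstacle, such as it is, is purely bookkeeping: one must check that the recursion~\eqref{eq:w-def} together with the periodicity argument really covers every $m\in\mathbb Z$, and that at each step the hypothesis~\eqref{eq:cond-positivity} is invoked with the matching tuples $(r_0,\ldots,r_{n-2})$ and $(r_0,\ldots,r_{n-3},r_{n-1})$. Once this is in place, every factor of every summand in~\eqref{eq:tau-expanded} is strictly positive, so the sum itself is strictly positive for all $k,t\in\mathbb Z$ and $n=1,2,\ldots,N$, completing the proof.
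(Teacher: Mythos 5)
Your proposal is correct and follows essentially the same route as the paper: the paper's proof is exactly the preceding discussion, namely the Binet--Cauchy expansion~\eqref{eq:tau-expanded-v}, the factorization~\eqref{eq:vtow} of $\mathcal V^{(k)}_{r_0,\dots,r_{n-1}}$, and an induction on $n$ showing via~\eqref{eq:w-def} that the hypotheses force all $w^{(m)}_{r_0,\dots,r_{n-1}}>0$, combined with positivity of the $z$-dependent factors from the standing assumptions. Your explicit handling of the induction direction and of the periodicity in $m$ is just a careful spelling-out of what the paper leaves as ``readily induced.''
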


\begin{corollary}
  A solution to the ndh-Toda lattice~\eqref{eq:ndhtoda-sf}
  with the finite lattice boundary condition~\eqref{eq:ndhtoda-finite-cond}
  is given by \eqref{eq:ndhtoda-tau} and \eqref{eq:tau-expanded}.
  If the positivity condition~\eqref{eq:cond-positivity} is satisfied,
  then the variables
  $q^{(k, t)}_n$, $e^{(k, t)}_n$, $\tilde e^{(k, t)}_n$ and $d^{(k, t)}_n$
  are always positive. Furthermore, if the parameter $s^{(t)}$ is chosen as
  $s^{(t)}\le 0$, then $f^{(k, t)}_n$ is always nonnegative.
\end{corollary}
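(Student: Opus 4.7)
The plan is to verify the three assertions of the Corollary in sequence. For the first---that \eqref{eq:ndhtoda-tau} together with \eqref{eq:tau-expanded} yields a solution of the ndh-Toda lattice~\eqref{eq:ndhtoda-sf} with the finite lattice boundary---I would simply observe that the moment representation \eqref{eq:ndhtoda-moment-repr} satisfies by construction the required time evolution $\mu^{(t+1)}_m=\mu^{(t)}_{m+M}-s^{(t)}\mu^{(t)}_m$, so Theorem~\ref{th:biorth-det-repr} produces well-defined biorthogonal polynomials whose compatibility relations are exactly \eqref{eq:ndhtoda-sf} as established earlier in Section~\ref{sec:m-1-reduction}. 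Formula \eqref{eq:tau-expanded} is then just the Binet--Cauchy expansion of \eqref{eq:tau-factorized} written out. The finite lattice boundary $\tau^{(k,t)}_n=0$ for $n>N$ is automatic from \eqref{eq:tau-expanded}: any strictly increasing $n$-tuple inside $\{0,1,\dots,N-1\}$ requires $n\le N$, so the summation is empty once $n$ exceeds $N$.

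For the second assertion, the key reduction is that once $\tau^{(k,t)}_n>0$ is established for all $k$, $t$ and $1\le n\le N$, each of $q^{(k,t)}_n$, $e^{(k,t)}_n$, $\tilde e^{(k,t)}_n$, and $d^{(k,t)}_n$ becomes a ratio of products of positive numbers by the explicit formulas~\eqref{eq:ndhtoda-tau}, hence strictly positive. Positivity of $\tau^{(k,t)}_n$ itself is verified term by term in \eqref{eq:tau-expanded}: the standing hypotheses $0<z_0<\dots<z_{N-1}$ and $s^{(\tau)}<z_0$ give $z_{r_j}^{k/M}>0$, $\prod_\tau(z_{r_j}-s^{(\tau)})>0$, and $\prod_{i<j}(z_{r_j}-z_{r_i})>0$, while the preceding theorem---invoked under the assumption~\eqref{eq:cond-positivity}---guarantees that every factor $w^{(m)}_{r_0,\dots,r_\ell}$ appearing in the leading product $w^{(k)}_{r_0}w^{(k+1)}_{r_0,r_1}\cdots w^{(k+n-1)}_{r_0,\dots,r_{n-1}}$ is positive. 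Every summand in \eqref{eq:tau-expanded} is therefore strictly positive, and so is the total.

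Finally, reading off $f^{(k,t)}_n=-s^{(t)}\tau^{(k,t)}_{n+2}\tau^{(k+M,t+1)}_n/(\tau^{(k,t+1)}_{n+1}\tau^{(k+M,t)}_{n+1})$ from \eqref{eq:ndhtoda-tau}, all four $\tau$-factors are strictly positive (or, in the boundary case $n=N-1$, the numerator factor $\tau^{(k,t)}_{n+2}$ vanishes, giving $f^{(k,t)}_n=0$), so the sign of $f^{(k,t)}_n$ equals the sign of $-s^{(t)}$; under the additional hypothesis $s^{(t)}\le 0$ this immediately yields $f^{(k,t)}_n\ge 0$. No real obstacle arises: every step is either an invocation of the preceding theorem or a direct sign count in \eqref{eq:ndhtoda-tau}. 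The only point requiring a little care is compatibility of hypotheses---since $z_0>0$, the restriction $s^{(t)}\le 0$ is strictly stronger than $s^{(t)}<z_0$, so the positivity of $\tau^{(k,t)}_n$ from the previous step remains available under the sign restriction used here.
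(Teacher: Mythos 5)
Your proposal is correct and follows essentially the same route as the paper: the corollary is a direct consequence of the preceding theorem (term-by-term positivity of the expanded $\tau^{(k,t)}_n$) combined with reading off signs from the explicit ratios \eqref{eq:ndhtoda-tau}, with the factor $-s^{(t)}$ in $f^{(k,t)}_n$ accounting for the extra hypothesis $s^{(t)}\le 0$. Your added remarks—emptiness of the sum for $n>N$ giving the finite-lattice condition, and compatibility of $s^{(t)}\le 0$ with $s^{(t)}<z_0$—are consistent with the paper's (unwritten) argument.
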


Finally, we discuss the asymptotic behaviour of the solution as $t \to \infty$.
Since, from the assumption, $0<z_0-s^{(t)}<z_1-s^{(t)}<\dots<z_{N-1}-s^{(t)}$ holds
for all $t \in \mathbb Z$,
we find
\begin{equation*}
  \fl
  \tau^{(k, t)}_n \sim \mathcal V^{(k)}_{N-n, N-n+1, \dots, N-1}\prod_{j=0}^{n-1}\left(z_{N-n+j}^{k/M} \prod_{\tau=0}^{t-1}(z_{N-n+j}-s^{(\tau)})\right)\prod_{0\le i<j\le n-1}(z_{N-n+j}-z_{N-n+i})
\end{equation*}
as $t \to \infty$.
Hence, we have the following asymptotic behaviour of
the solution given by \eqref{eq:ndhtoda-tau} and \eqref{eq:tau-expanded-v}:
\begin{subequations}
  \begin{gather}
    q^{(k, t)}_n \to \frac{\mathcal V^{(k)}_{N-n, N-n+1, \dots, N-1}\mathcal V^{(k+1)}_{N-n-1, N-n, \dots, N-1}}{\mathcal V^{(k)}_{N-n-1, N-n, \dots, N-1}\mathcal V^{(k+1)}_{N-n, N-n+1, \dots, N-1}}z_{N-n-1}^{1/M},\\
    e^{(k, t)}_n \sim \gamma^{(k)}_n \prod_{\tau=0}^{t-1}\frac{z_{N-n-2}-s^{(\tau)}}{z_{N-n-1}-s^{(\tau)}}\to 0\label{eq:asymptotics-e}
  \end{gather}
\end{subequations}
as $t \to \infty$, where
\begin{equation*}
  \fl
  \gamma^{(k)}_n\coloneq \frac{\mathcal V^{(k)}_{N-n-2, N-n-1, \dots, N-1}\mathcal V^{(k)}_{N-n, N-n+1, \dots, N-1}}{(\mathcal V^{(k)}_{N-n-1, N-n, \dots, N-1})^2}\cdot\frac{z_{N-n-2}^{k/M}}{z_{N-n-1}^{(k+M)/M}}\cdot\frac{\prod_{j=0}^{n}(z_{N-n-1+j}-z_{N-n-2})}{\prod_{j=0}^{n-1}(z_{N-n+j}-z_{N-n-1})}.
\end{equation*}
Especially, we also have
\begin{equation*}
  \prod_{j=0}^{M-1} q^{(k+j, t)}_n \to z_{N-n-1}
\end{equation*}
as $t \to \infty$.

The results indicate that the upper Hessenberg matrix
$H^{(k, t)}$~\eqref{eq:upper-Hessenberg} goes
to an upper triangular matrix whose $(n, n)$-entry is $z_{N-n-1}$ as
$t \to \infty$, $n=0, 1, \dots, N-1$. Since $H^{(k, t)}$ and $H^{(k, t+1)}$ are similar,
it is revealed that $z_0, z_1, \dots, z_{N-1}$ are the eigenvalues of
$H^{(k, t)}$. That is, the recurrence relations of the ndh-Toda
lattice~\eqref{eq:dhtoda} with the finite lattice
boundary condition~\eqref{eq:ndhtoda-finite-cond} give an eigenvalue
algorithm for upper Hessenberg matrices that can be factorized into
a product of bidiagonal matrices as \eqref{eq:upper-Hessenberg}.
Its convergence speed depends on the value $(z_{N-n-2}-s^{(t)})/(z_{N-n-1}-s^{(t)})$,
$n=0, 1, \dots, N-2$, by \eqref{eq:asymptotics-e}.
This means that an appropriate choice of the parameter $s^{(t)}$
may improve the convergence speed.

\section{Ultradiscretization}\label{sec:ultradiscretization}
In this section, we ultradiscretize the ndh-Toda lattice and its solution
in \sref{sec:m-1-reduction},
and give a proof of a connection between the derived ultradiscrete system and
the generalized BBS.

\subsection{Nonautonomous ultradiscrete finite hungry Toda lattice}
For the variables and parameter of the ndh-Toda lattice~\eqref{eq:ndhtoda-sf},
we consider the transformations from the variables to new variables
denoted by capital letters as follows:
$q^{(k, t)}_n=\rme^{-Q^{(k, t)}_n/\epsilon}$,
$e^{(k, t)}_n=\rme^{-E^{(k, t)}_n/\epsilon}$,
$\tilde e^{(k, t)}_n=\rme^{-\tilde E^{(k, t)}_n/\epsilon}$,
$d^{(k, t)}_n=\rme^{-D^{(k, t)}_n/\epsilon}$,
$f^{(k, t)}_n=\rme^{-F^{(k, t)}_n/\epsilon}$ and
$s^{(t)}=-\rme^{-S^{(t)}/\epsilon}$, where $\epsilon$ is
a positive parameter.
Since there is an ultradiscretization formula
\begin{equation*}
  \lim_{\epsilon \to +0} -\epsilon\log(p_1\rme^{-A/\epsilon}+p_2\rme^{-B/\epsilon})=\min(A, B),
\end{equation*}
where $p_1$ and $p_2$ are positive numbers,
applying these transformations and taking a limit $\epsilon \to +0$ yield
piecewise linear recurrence relations
\begin{subequations}\label{eq:nuhtoda}
  \begin{gather}
    Q^{(k, t+1)}_n=\min(D^{(k, t)}_n, \tilde E^{(k, t)}_n),\label{eq:nuhtoda-q}\\
    E^{(k, t+1)}_n=\min(F^{(k, t)}_n, \tilde E^{(k+M, t)}_n),\label{eq:nuhtoda-e}\\
    D^{(k, t)}_{n+1}=D^{(k, t)}_n-Q^{(k, t+1)}_n+Q^{(k, t)}_{n+1},\label{eq:nuhtoda-d}\\
    F^{(k, t)}_{n+1}=F^{(k, t)}_n-E^{(k, t+1)}_n+E^{(k, t)}_{n+1},\label{eq:nuhtoda-f}\\
    \tilde E^{(k+1, t)}_n=\tilde E^{(k, t)}_n-Q^{(k, t+1)}_n+Q^{(k, t)}_{n+1},\label{eq:nuhtoda-te1}\\
    \tilde E^{(k, t)}_{n+1}=\tilde E^{(k+M, t)}_n-E^{(k, t+1)}_n+E^{(k, t)}_{n+1}\label{eq:nuhtoda-te2}
  \end{gather}
  for $n=0, 1, 2, \dots$ with the boundary condition
  \begin{gather}
    D^{(k, t)}_0=Q^{(k, t)}_0,\label{eq:nuhtoda-bc-d}\\
    F^{(k, t)}_0=E^{(k, t)}_0+\max\left(0, S^{(t)}-\sum_{j=0}^{M-1} Q^{(k+j, t)}_0\right),\label{eq:nuhtoda-bc-f}\\
    \tilde E^{(k, t)}_0=E^{(k, t)}_0+\max\left(0, \sum_{j=0}^{M-1} Q^{(k+j, t)}_0-S^{(t)}\right)\label{eq:nuhtoda-bc-te}
  \end{gather}
  for all $k, t \in \mathbb Z$.
\end{subequations}
We call the system~\eqref{eq:nuhtoda}
the \emph{nonautonomous ultradiscrete hungry Toda lattice} (nuh-Toda lattice).
In addition, we also impose the finite lattice condition corresponding
to \eqref{eq:ndhtoda-finite-cond}:
\begin{equation}\label{eq:nuhtoda-finite-cond}
  E^{(k, t)}_{N-1}=\tilde E^{(k, t)}_{N-1}=+\infty.
\end{equation}

A solution to the nuh-Toda lattice~\eqref{eq:nuhtoda} with
the finite lattice condition~\eqref{eq:nuhtoda-finite-cond} is
constructed from the solution~\eqref{eq:ndhtoda-tau} and \eqref{eq:tau-expanded}
to the ndh-Toda lattice~\eqref{eq:ndhtoda-sf} with \eqref{eq:ndhtoda-finite-cond}.
Consider the transformations of variables
$\tau^{(k, t)}_n=\rme^{-T^{(k, t)}_n/\epsilon}$,
$z_n=p_n\rme^{-Z_n/\epsilon}$ and
$w^{(m)}_n=\rme^{-W^{(m)}_n/\epsilon}$
and the limit procedure $\epsilon \to +0$,
where $p_n$ is a positive constant satisfying
$p_n<p_{n+1}$ if $Z_n=Z_{n+1}$.
Note that, since we assume the inequality $0<z_0<z_1<\dots<z_{N-1}$ in
\sref{sec:m-1-reduction}, the new variable $Z_n$ satisfies
$Z_0\ge Z_1 \ge \dots \ge Z_{N-1}$.
To apply the transformations of variables, $\tau^{(k, t)}_n$ must be positive;
i.e. the condition~\eqref{eq:cond-positivity} must be satisfied.
This means that the new variables satisfy the relation
\begin{equation}\label{eq:u-cond-positivity}
  \fl
  W_{r_0, \dots, r_{n-3}, r_{n-2}}^{(m)}+W_{r_0, \dots, r_{n-3}, r_{n-1}}^{(m+1)}+\frac{1}{M}Z_{r_{n-1}}\le W_{r_0, \dots, r_{n-3}, r_{n-2}}^{(m+1)}+W_{r_0, \dots, r_{n-3}, r_{n-1}}^{(m)}+\frac{1}{M}Z_{r_{n-2}}
\end{equation}
for all $m=0, 1, \dots, M-1$ and all $n$-tuples $(r_0, r_1, \dots, r_{n-1})$
satisfying $0\le r_0<r_1<\dots<r_{n-1}\le N-1$, $n=1, 2, 3, \dots, N$.
We should remark that the formula
\begin{equation*}
  \fl
  \lim_{\epsilon\to+0}-\epsilon\log(p_1\rme^{-A/\epsilon}-p_2\rme^{-B/\epsilon})=
  \begin{cases*}
    A & if $A<B$ or $A=B$ and $p_1>p_2$,\\
    \text{indefinite} & if $A>B$ or $A=B$ and $p_1\le p_2$,
  \end{cases*}
\end{equation*}
holds,
where $p_1$ and $p_2$ are positive numbers.
The latter indefinite result is the cause of the
\emph{negative problem of ultradiscretization}.
However, if the condition~\eqref{eq:u-cond-positivity} is satisfied,
then it is assured that we can always use the former result for \eqref{eq:w-def}.
Hence we obtain, by induction on $n$,
\begin{equation}\label{eq:ud-W}
  \fl
  W_{r_0, \dots, r_{n-3}, r_{n-2}, r_{n-1}}^{(m+1)}
  =W_{r_0, \dots, r_{n-3}, r_{n-1}}^{(m+1)}+\frac{1}{M}Z_{r_{n-1}}
  =\dots
  =W_{r_{n-1}}^{(m+1)}+\frac{n-1}{M}Z_{r_{n-1}}.
\end{equation}
We should remark that, by using \eqref{eq:ud-W},
the condition~\eqref{eq:u-cond-positivity} is simply rewritten as
\begin{equation*}
  W^{(m)}_{r_{n-2}}+W^{(m+1)}_{r_{n-1}}+\frac{1}{M}Z_{r_{n-1}}
  \le W^{(m+1)}_{r_{n-2}}+W^{(m)}_{r_{n-1}}+\frac{1}{M}Z_{r_{n-2}}.
\end{equation*}
Hence, we obtain the following theorem.

\begin{theorem}\label{th:nuhtoda-sol}
If the conditions $Z_0\ge Z_1 \ge \dots \ge Z_{N-1}$ and
\begin{equation*}
  W^{(m)}_{r_1}-W^{(m+1)}_{r_1}+W^{(m+1)}_{r_0}-W^{(m)}_{r_0}+\frac{Z_{r_0}-Z_{r_1}}{M}\ge 0
\end{equation*}
are satisfied for all $m=0, 1, \dots, M-1$ and all pairs $(r_0, r_1)$
satisfying $0\le r_0<r_1\le N-1$, then we have, from~\eqref{eq:tau-expanded}
and \eqref{eq:ud-W},
\begin{gather*}
  \fl
  T^{(k, t)}_0=0,\\
  \fl
  \begin{multlined}
    T^{(k, t)}_n=\min_{0\le r_0<r_1<\dots<r_{n-1}\le N-1}\left(\sum_{j=0}^{n-1}\left(W_{r_j}^{(k+j)}+\frac{k+(M+1)j}{M}Z_{r_{j}}+\sum_{\tau=0}^{t-1}\min(Z_{r_j}, S^{(\tau)})\right)\right),\\
    n=1, 2, \dots, N.
  \end{multlined}
\end{gather*}
By using the function $T^{(k, t)}_n$, a solution to the nuh-Toda lattice~\eqref{eq:nuhtoda}
with the finite lattice condition~\eqref{eq:nuhtoda-finite-cond} is given by,
from~\eqref{eq:ndhtoda-tau},
\begin{gather*}
  Q^{(k, t)}_n=T^{(k, t)}_n-T^{(k, t)}_{n+1}+T^{(k+1, t)}_{n+1}-T^{(k+1, t)}_n,\\
  E^{(k, t)}_n=T^{(k, t)}_{n+2}-T^{(k, t)}_{n+1}+T^{(k+M, t)}_{n}-T^{(k+M, t)}_{n+1},\\
  \tilde E^{(k, t)}_n=T^{(k, t)}_{n+2}-T^{(k, t)}_{n+1}+T^{(k, t+1)}_{n}-T^{(k, t+1)}_{n+1},\\
  D^{(k, t)}_n=T^{(k, t+1)}_n-T^{(k, t)}_{n+1}+T^{(k+1, t)}_{n+1}-T^{(k+1, t+1)}_n,\\
  F^{(k, t)}_n=T^{(k, t)}_{n+2}-T^{(k, t+1)}_{n+1}+T^{(k+M, t+1)}_{n}-T^{(k+M, t)}_{n+1}+S^{(t)}.
\end{gather*}
\end{theorem}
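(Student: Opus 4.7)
The plan is to derive the formula for $T^{(k,t)}_n$ by applying ultradiscretization directly to the explicit $\tau$-function expression \eqref{eq:tau-expanded}, then to obtain the formulas for $Q$, $E$, $\tilde E$, $D$, $F$ by substituting $\tau^{(k,t)}_n = \rme^{-T^{(k,t)}_n/\epsilon}$ into \eqref{eq:ndhtoda-tau}. Since \eqref{eq:tau-expanded} expresses $\tau^{(k,t)}_n$ as a sum of \emph{positive} terms (by the positivity hypothesis), the standard ultradiscretization formula converts the outer sum over $n$-tuples into a minimum; the work then reduces to ultradiscretizing each factor appearing inside the sum.

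Each factor is handled separately. The power $z_{r_j}^{k/M} = p_{r_j}^{k/M}\rme^{-(k/M)Z_{r_j}/\epsilon}$ trivially contributes $(k/M)Z_{r_j}$. The factor $z_{r_j}-s^{(\tau)} = p_{r_j}\rme^{-Z_{r_j}/\epsilon}+\rme^{-S^{(\tau)}/\epsilon}$ is a sum of positive exponentials and thus contributes $\min(Z_{r_j}, S^{(\tau)})$. The Vandermonde factor $\prod_{0\le i<j\le n-1}(z_{r_j}-z_{r_i})$ involves subtractions, and here the assumption $Z_0\ge Z_1\ge\dots\ge Z_{N-1}$ together with the tiebreaking convention $p_{r_j}>p_{r_i}$ whenever $Z_{r_j}=Z_{r_i}$ (for $r_i<r_j$) ensures that the leading term in $z_{r_j}-z_{r_i}$ is $p_{r_j}\rme^{-Z_{r_j}/\epsilon}$; hence each such factor contributes $Z_{r_j}$, and summing over $i<j$ produces a total of $\sum_{j}jZ_{r_j}$.

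It remains to ultradiscretize the $\mathcal V$ factor $w_{r_0}^{(k)}w_{r_0,r_1}^{(k+1)}\dots w_{r_0,\dots,r_{n-1}}^{(k+n-1)}$ from \eqref{eq:vtow}. The recursive definition \eqref{eq:w-def} is, after taking logs, a subtraction of two positive quantities whose order is fixed precisely by the hypothesized inequality (which is \eqref{eq:u-cond-positivity} written for pairs); so the subtraction case resolves unambiguously and induction on the subscript length yields the closed form \eqref{eq:ud-W}. Applied to the tuple $(r_0,\dots,r_j)$ of length $j+1$, this gives $W^{(k+j)}_{r_0,\dots,r_j}\to W^{(k+j)}_{r_j}+(j/M)Z_{r_j}$. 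Collecting all contributions, the $j$-th factor in the product becomes $W^{(k+j)}_{r_j}+\frac{k+(M+1)j}{M}Z_{r_j}+\sum_{\tau=0}^{t-1}\min(Z_{r_j},S^{(\tau)})$, and summing over $j$ and minimizing over the tuples yields the stated formula for $T^{(k,t)}_n$.

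The second half of the theorem is then a direct computation: each rational expression in \eqref{eq:ndhtoda-tau} is a monomial in the $\tau^{(k,t)}_n$ (with $-s^{(t)}$ as an additional factor for $f^{(k,t)}_n$), so ultradiscretization produces a signed linear combination of $T$-values (plus $S^{(t)}$ in the case of $F$) with no further $\min$ operations. The chief technical obstacle is the step above verifying \eqref{eq:ud-W} by induction: the subtractions arising in \eqref{eq:w-def} are where the negative problem of ultradiscretization could bite, and one must carefully check that the hypothesized two-index inequality propagates through the recursion to keep the correct term dominant at every stage; the reduction from the general $n$-tuple inequality \eqref{eq:u-cond-positivity} to the simpler pairwise form stated in the theorem is what makes this induction go through.
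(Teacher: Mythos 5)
Your proposal is correct and follows essentially the same route as the paper: term-by-term ultradiscretization of \eqref{eq:tau-expanded} (with the ordering $Z_0\ge\dots\ge Z_{N-1}$ and the $p_n$ tiebreak handling the Vandermonde and $z-s$ factors), the inductive resolution of the subtractions in \eqref{eq:w-def} via the pairwise inequality to obtain \eqref{eq:ud-W}, and then direct substitution of $\tau^{(k,t)}_n=\rme^{-T^{(k,t)}_n/\epsilon}$ into \eqref{eq:ndhtoda-tau}. Your closing remark about the pairwise condition propagating through the recursion is precisely the point the paper makes when it rewrites \eqref{eq:u-cond-positivity} using \eqref{eq:ud-W}.
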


\subsection{Connection to the generalized BBS}

Finally, we prove a correspondence between the generalized BBS and
the nuh-Toda lattice~\eqref{eq:nuhtoda}.

The time evolution equation of the BBS with $M$ kinds of balls
and the carrier of capacity $S^{(t)}>0$ at time $t$ is given by
an $(M+1)$-reduced nonautonomous ultradiscrete KP lattice:
\begin{subequations}\label{eq:u-M1-reduced-KP}
  \begin{gather}
    U^{(k, t+1)}_n=U^{(k, t)}_n-X^{(k, t)}_n+X^{(k+1, t)}_n,\\
    V^{(k, t)}_{n+1}=V^{(k, t)}_n+X^{(k, t)}_n-X^{(k+1, t)}_n,\\
    X^{(k, t)}_n=\min_{i=0,1, \dots, M}\left(\sum_{j=0}^{M-i-1}U^{(k+j, t)}_n+\sum_{j=M-i+1}^M V^{(k+j, t)}_n\right),
  \end{gather}
  where $U^{(k+M+1, t)}_n=U^{(k, t)}_n$ and $V^{(k+M+1, t)}_n=V^{(k, t)}_n$ for all
  $k, t, n \in \mathbb Z$, with the boundary condition
  \begin{alignat}{3}
    &U^{(0, t)}_n=1,&\quad&V^{(0, t)}_n=S^{(t)},&\quad&\\
    &U^{(k, t)}_n=0,&&V^{(k, t)}_n=0,&&k=1, 2, \dots, M,
  \end{alignat}
  for $|n|\gg 1$.
\end{subequations}
We should choose the initial values of the system~\eqref{eq:u-M1-reduced-KP} to satisfy
\begin{equation*}
  \sum_{j=0}^{M} U^{(j, 0)}_n=1
\end{equation*}
for all $n \in \mathbb Z$. Then, it is readily shown that the relations
\begin{equation*}
  \sum_{j=0}^{M} U^{(j, t)}_n=1,\quad
  \sum_{j=0}^{M} V^{(j, t)}_n=S^{(t)}
\end{equation*}
hold for all $n, t \in \mathbb Z$.
The variables denote
\begin{itemize}
\item $U^{(0, t)}_n\in\{0, 1\}$: the number of empty spaces in the $n$th box at time $t$;
\item $U^{(k, t)}_n\in\{0, 1\}$: the number of balls with index $k$ in the $n$th box at time $t$,
  $k=1, 2, \dots, M$;
\item $V^{(0, t)}_n\in\{0, 1, \dots, S^{(t)}\}$: the number of empty spaces in the carrier at the $n$th box
  from time $t$ to $t+1$;
\item $V^{(k, t)}_n\in\{0, 1, \dots, S^{(t)}\}$: the number of balls with index $k$ in the carrier at the $n$th box from time $t$ to $t+1$,
  $k=1, 2, \dots, M$.
\end{itemize}

\begin{figure}
  \centering
  \includegraphics[width=\textwidth]{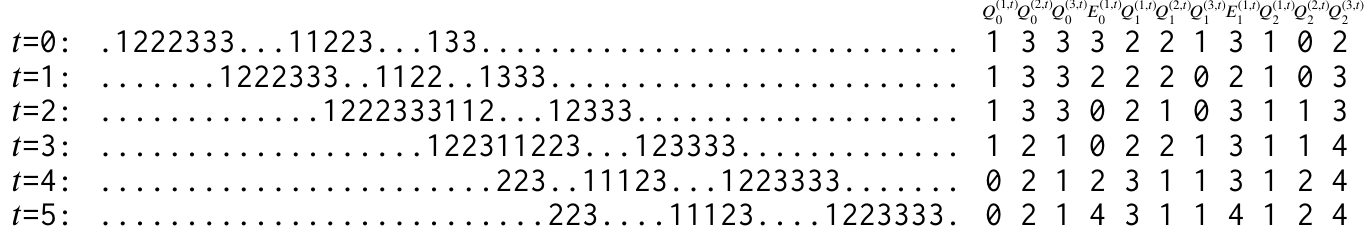}
  \caption{An example of the time evolution of (left) the BBS~\eqref{eq:u-M1-reduced-KP}
    with $M=3$ kinds of balls and carrier capacity $S^{(t)}=6$ for all $t$,
    and of (right) the nuh-Toda lattice~\eqref{eq:nuhtoda}
    with the finite lattice boundary condition $E^{(1, t)}_2=+\infty$.
    As stated in Theorem~\ref{th:corr-bbs-hungry},
    we can see a correspondence between the states of the BBS
    and of the nuh-Toda lattice.}
  \label{fig:ex-bbs}
\end{figure}

The left side of \fref{fig:ex-bbs} shows an example of
the time evolution of the BBS~\eqref{eq:u-M1-reduced-KP}
with $M=3$ kinds of balls and carrier capacity $S^{(t)}=6$ for all $t$,
in which `\texttt{1}', `\texttt{2}', `\texttt{3}', and `\texttt{.}' denote
a ball with index 1, 2, 3, and an empty box, respectively.
Each box can contain only one ball.
In the followings, we regard ``an empty space'' as a ball with index 0.
Then, we can explain the evolution rule of the BBS~\eqref{eq:u-M1-reduced-KP} as follows:
From time $t$ to $t+1$, the carrier of capacity $S^{(t)}$
moves from left to right.
When the carrier passes each box, if the box contains a ball with
index $k$, then the carrier exchanges the ball with
a ball in the carrier whose index is the smallest in the carrier's balls,
where the index order is defined on $\mathbb Z/(M+1)\mathbb Z$ as
the smallest index is $k+1$;
e.g. if $k=2$, then the index order is $3<4<5<\dots<M<0<1<2$.
\begin{figure}
  \centering
  \includegraphics[width=\textwidth]{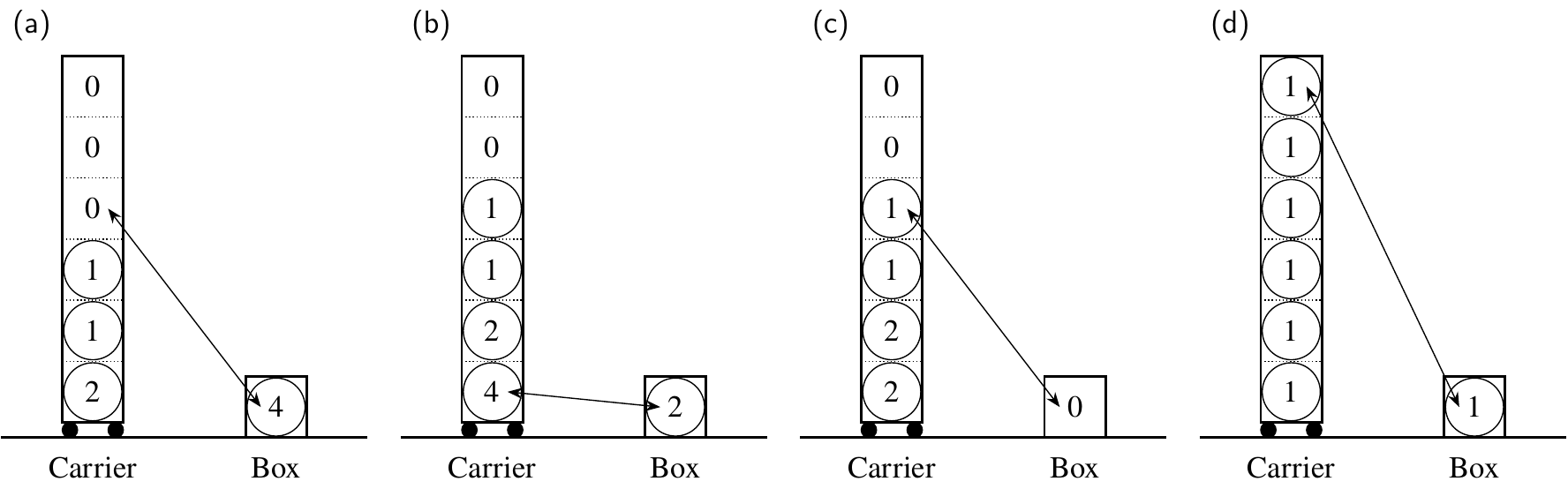}
  \caption{Examples of the exchange rule of balls.}
  \label{fig:exchange-rule}
\end{figure}
\Fref{fig:exchange-rule} also illustrates the exchange rule by examples.

Now, we can prove the following theorem.
\begin{theorem}\label{th:corr-bbs-hungry}
  For the BBS with $M$ kinds of balls and carrier capacity
  $S^{(t)}$ at each time $t$, let
  \begin{itemize}
  \item $Q^{(k, t)}_n$ be the number of balls with index $k$ in the $n$th block of balls
    at time $t$, $k=1, 2, \dots, M$;
  \item $E^{(1, t)}_n$ be the number of empty boxes between the $n$th and $(n+1)$st
    blocks of balls at time $t$.
  \end{itemize}
  Then, the variables $Q^{(k, t)}_n$ and $E^{(1, t)}_n$ satisfy
  the nuh-Toda lattice~\eqref{eq:nuhtoda} with
  the finite lattice condition \eqref{eq:nuhtoda-finite-cond},
  in which $N$ denotes the number of the blocks of balls.
\end{theorem}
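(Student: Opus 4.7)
The plan is to simulate the BBS carrier dynamics explicitly and to identify each auxiliary variable of the nuh-Toda lattice with a natural combinatorial quantity attached to the carrier as it sweeps from left to right between times $t$ and $t+1$. Since the physical BBS data consist only of $Q^{(k,t)}_n$ for $k=1,\dots,M$ and of $E^{(1,t)}_n$, I would first extend them to all indices $k$ by augmenting the carrier state: let $C^{(k,t)}_n$ be the number of balls of index $k$ still in the carrier just after it has processed the $n$th block, with index $0$ reserved for empty slots. The carrier begins each time step empty, so $C^{(k,t)}_{-1}=0$ for $k=1,\dots,M$ and $C^{(0,t)}_{-1}=S^{(t)}$.

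I would then interpret $D^{(k,t)}_n$ as the amount of index-$k$ material being carried past block $n$, $\tilde E^{(k,t)}_n$ as the effective gap at position $n$ seen by the exchange, and $F^{(k,t)}_n$ as the slack that accumulates when the carrier leaves block $n$ without being fully loaded. With this dictionary the boundary conditions \eqref{eq:nuhtoda-bc-d}--\eqref{eq:nuhtoda-bc-te} become direct bookkeeping at $n=0$: $D^{(k,t)}_0=Q^{(k,t)}_0$ holds because the empty carrier absorbs every index-$k$ ball of the $0$th block; the $\max$ in $\tilde E^{(k,t)}_0$ accounts for overflow when the first block exceeds $S^{(t)}$; and the $\max$ in $F^{(k,t)}_0$ accounts for the dual case when the carrier still has capacity to spare after swallowing the first block.

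The proof proper is an induction on $n$. At each step I would analyse the cyclic exchange rule: when the carrier meets a ball of index $k$, it outputs the ball whose index is smallest in the cyclic order starting from $k+1$. Translating this rule coordinate by coordinate yields exactly $\min$-type updates; in particular equation \eqref{eq:nuhtoda-q} says that the number of index-$k$ balls deposited into the new $n$th block is limited either by the carrier's supply $D^{(k,t)}_n$ or by the available gap $\tilde E^{(k,t)}_n$, and equation \eqref{eq:nuhtoda-e} has the analogous reading with the index shift $k\mapsto k+M$ reflecting the cyclic $(M+1)$-structure of the exchange. The conservation recurrences \eqref{eq:nuhtoda-d}--\eqref{eq:nuhtoda-te2} then express that the total number of balls of each index is preserved across the exchange at block $n$. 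After all $N$ blocks have been processed, every ball still in the carrier is deposited into the semi-infinite empty region to the right, which gives the finite lattice condition \eqref{eq:nuhtoda-finite-cond}.

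The main obstacle is precisely the correct bookkeeping of the cyclic index order. The carrier holds a multiset whose ``smallest'' element depends on the last-absorbed index, and the $k\mapsto k+M$ shift appearing in \eqref{eq:nuhtoda-e} must be matched against the $(M+1)$-periodicity of the underlying KP lattice on the BBS side. Verifying that the right index shift appears in each of the $\tilde E$, $D$, $F$ evolutions, i.e., that the cyclic exchange on $\mathbb Z/(M+1)\mathbb Z$ is correctly translated into the $M$-shift on the hungry Toda side, is the delicate combinatorial case analysis that will dominate the proof.
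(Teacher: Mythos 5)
Your plan is essentially the paper's own proof: simulate the carrier sweep from left to right, interpret $D^{(k, t)}_n$ as the carrier's index-$k$ load available for the new $n$th block, $\tilde E^{(k, t)}_n$ as a $k$-dependent gap measured between leftmost positions (with $k=M+1$ playing the role of the empty boxes), and $F^{(1, t)}_n$ as $E^{(1, t)}_n$ plus the carrier's remaining free space, then verify the boundary conditions \eqref{eq:nuhtoda-bc-d}--\eqref{eq:nuhtoda-bc-te} at $n=0$ and the recurrences block by block, which is exactly how the paper argues. The only loose point in your wording is calling \eqref{eq:nuhtoda-d}--\eqref{eq:nuhtoda-te2} conservation of ball numbers and $F$ pure carrier slack: the $\tilde E$ updates are positional (leftmost-position) bookkeeping rather than ball counts, and $F^{(1, t)}_n$ must include the gap $E^{(1, t)}_n$ as well as the carrier's empty spaces, as your own check of \eqref{eq:nuhtoda-bc-f} at $n=0$ already forces.
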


The right side of \fref{fig:ex-bbs} shows
an example of the time evolution of the nuh-Toda lattice
with $M=3$, $N=3$ and $S^{(t)}=6$ for all $t$.
The initial values of the nuh-Toda lattice are chosen to correspond
to the initial state of the BBS.
The solution given by Theorem~\ref{th:nuhtoda-sol} with the parameters
$Z_0=7$, $Z_1=5$, $Z_2=3$, $W^{(0)}_0=1$, $W^{(1)}_0=5/3$,
$W^{(2)}_0=1/3$, $W^{(0)}_1=6$, $W^{(1)}_1=19/3$, $W^{(2)}_1=17/3$, $W^{(0)}_2=13$,
$W^{(1)}_2=13$, and $W^{(2)}_2=12$
corresponds to the time evolution in \fref{fig:ex-bbs}.
Notice that the balls in each block must be arranged in ascending order
of indices from left to right.
For example, `\texttt{1222333112}' is composed of two blocks `\texttt{1222333}' and
`\texttt{112}'.

\begin{proof}[Proof of Theorem~\ref{th:corr-bbs-hungry}]
  We will also show the roles of the other variables.
  \begin{itemize}
  \item $D^{(k, t)}_n$: the maximum number of balls with index $k$ that
    the carrier can put into boxes as the part of the $n$th block of balls at time $t+1$,
    $k=1, 2, \dots, M$;
  \item $\tilde E^{(k, t)}_n$: the number of boxes between the leftmost position of
    the balls with index $k$ corresponding to
    the variables $Q^{(k, t+1)}_n$ and $Q^{(k, t)}_{n+1}$, $k=1, 2, \dots, M$;
  \item $\tilde E^{(1+M, t)}_n$: the number of boxes between the leftmost position of
    the empty boxes corresponding to
    the variables $E^{(1, t+1)}_n$ and $E^{(1, t)}_{n+1}$;
  \item $F^{(1, t)}_n$: the sum of the value $E^{(1, t)}_n$ and
    the number of empty spaces in the carrier after passing the $n$th block of
    balls from time $t$ to $t+1$.
  \end{itemize}
  \begin{figure}
    \centering
    \includegraphics{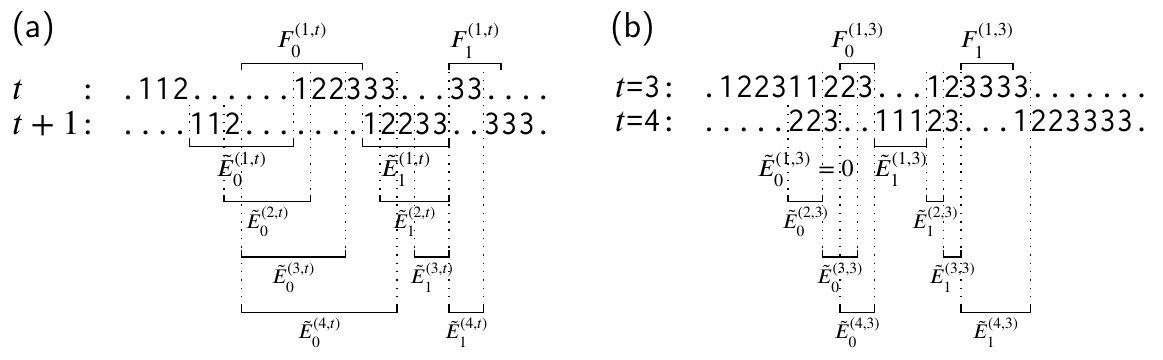}
    \caption{Illustration of the role of the variables $\tilde E^{(k, t)}_n$ and $F^{(1, t)}_n$.
    (a) An example of the case $M=3$, $N=3$ and $S^{(t)}=4$.
    (b) Another example in which the states are excerpted from \fref{fig:ex-bbs},
      $M=3$, $N=3$ and $S^{(3)}=6$.}
    \label{fig:illust-role}
  \end{figure}
  See also~\fref{fig:illust-role}.
  In this proof, we will use the following simple formulae:
  \begin{gather*}
    -\min(-A, -B)=\max(A, B),\\
    A+\min(B, C)=\min(A+B, A+C).
  \end{gather*}

  First, since the carrier gets $Q^{(k, t)}_0$ balls with index $k$
  from the $0$th block of balls, the boundary condition~\eqref{eq:nuhtoda-bc-d}
  gives the number of balls with index $k$ that the carrier can put
  into boxes as the part of the $0$th block of balls at time $t+1$.
  The carrier puts $Q^{(k, t+1)}_n$ balls with index $k$ and
  gets $Q^{(k, t)}_{n+1}$ balls with index $k$ between
  the leftmost position of the $n$th and ($n+1$)st blocks of empty boxes.
  Hence, the recurrence relation~\eqref{eq:nuhtoda-d}
  indeed calculates the value of $D^{(k, t)}_n$ for its role correctly.

  Next, since the carrier exchanges $\min(\sum_{k=1}^{M} Q^{(k, t)}_0, S^{(t)})$
  balls in the 0th block of balls with empty spaces in the carrier,
  the value of $\tilde E^{(1, t)}_0$ is given by
  \begin{align*}
    \tilde E^{(1, t)}_0
    &=E^{(1, t)}_0+\sum_{k=1}^{M} Q^{(k, t)}_0-\min\left(\sum_{k=1}^{M} Q^{(k, t)}_0, S^{(t)}\right)\\
    &=E^{(1, t)}_0+\max\left(0, \sum_{k=1}^{M} Q^{(k, t)}_0-S^{(t)}\right).
  \end{align*}
  This is the boundary condition~\eqref{eq:nuhtoda-bc-te}.
  Then, it is obvious that
  the recurrence relations~\eqref{eq:nuhtoda-te1} and \eqref{eq:nuhtoda-te2}
  indeed calculate the value of $\tilde E^{(k, t)}_n$ for its role.
  Similarly, the value of $F^{(1, t)}_0$ should be
  \begin{align*}
    F^{(1, t)}_0
    &=E^{(1, t)}_0+S^{(t)}-\min\left(\sum_{k=1}^{M} Q^{(k, t)}_0, S^{(t)}\right)\\
    &=E^{(1, t)}_0+\max\left(0, S^{(t)}-\sum_{k=1}^{M} Q^{(k, t)}_0\right),
  \end{align*}
  which coincides with the boundary condition~\eqref{eq:nuhtoda-bc-f}.
  Here, let $Y^{(1, t)}_n$ be
  the number of empty spaces in the carrier just after passing the $n$th block of
  balls at time $t$; i.e.
  \begin{equation*}
    F^{(1, t)}_n=E^{(1, t)}_n+Y^{(1, t)}_n.
  \end{equation*}
  Then, since the carrier puts $\min(E^{(1, t)}_n, S^{(t)}-Y^{(1, t)}_n)$ balls
  into the $n$th block of empty boxes at time $t$,
  the number of empty spaces in the carrier
  just before the carrier passes the ($n+1$)st block of balls is given
  by $Y^{(1, t)}_n+\min(E^{(1, t)}_n, S^{(t)}-Y^{(1, t)}_n)=\min(F^{(1, t)}_n, S^{(t)})$.
  The carrier passes
  $E^{(1, t)}_n-\min(E^{(1, t)}_n, S^{(t)}-Y^{(1, t)}_n)=\max(0, F^{(1, t)}_n-S^{(t)})$
  empty boxes with no balls between the $n$th and ($n+1$)st blocks of balls.
  Hence, the carrier exchanges $E^{(1, t+1)}_n-\max(0, F^{(1, t)}_n-S^{(t)})$ balls
  in the ($n+1$)st block of balls with empty spaces in the carrier.
  We obtain
  \begin{align*}
    Y^{(1, t)}_{n+1}
    &=\min(F^{(1, t)}_n, S^{(t)})-\left(E^{(1, t+1)}_n-\max(0, F^{(1, t)}_n-S^{(t)})\right)\\
    &=F^{(1, t)}_n-E^{(1, t+1)}_n
  \end{align*}
  and
  \begin{align*}
    F^{(1, t)}_{n+1}
    &=E^{(1, t)}_{n+1}+Y^{(1, t)}_{n+1}\\
    &=F^{(1, t)}_n-E^{(1, t+1)}_n+E^{(1, t)}_{n+1},
  \end{align*}
  which coincides with the recurrence relation~\eqref{eq:nuhtoda-f}.

  Notice that the value of $\tilde E^{(k, t)}_n$ gives
  the maximum number of boxes in which the carrier can put the balls
  with index $k$ as the part of the $n$th block of balls at time $t+1$.
  Hence, the value of $Q^{(k, t+1)}_n$ is given by
  the recurrence relation~\eqref{eq:nuhtoda-q}.
  We also notice that $\tilde E^{(1+M, t)}_n$ gives
  the maximum number of boxes which can be consist of $n$th block of empty boxes
  at time $t+1$; i.e. $E^{(1, t+1)}_n=\tilde E^{(1+m, t)}_n$ if the carrier
  has a sufficient number of empty spaces.
  Further, by the discussion in the previous paragraph,
  we find that the sum of
  ``the number of empty boxes that the carrier passes with no balls between the $n$th and ($n+1$)st blocks of balls''
  and ``the number of empty spaces in the carrier just before passing the ($n+1$)st block
  of balls'' is equal to
  \begin{flalign*}
    \max(0, F^{(1, t)}_n-S^{(t)})+\min(F^{(1, t)}_n, S^{(t)})
    &=F^{(1, t)}_n-\min(F^{(1, t)}_n, S^{(t)})+\min(F^{(1, t)}_n, S^{(t)})&\\
    &=F^{(1, t)}_n.
  \end{flalign*}
  Therefore, the value of $F^{(1, t)}_n$ also gives
  the maximum number of boxes which can be consist of $n$th block of empty boxes
  at time $t+1$,
  and the recurrence relation~\eqref{eq:nuhtoda-e} indeed calculates
  the value of $E^{(1, t+1)}_n$.
\end{proof}

\section{Concluding remarks}\label{sec:concluding-remarks}
In this paper, we have derived the nuh-Toda lattice
and constructed its particular solution under the finite lattice boundary condition
by using the theory of biorthogonal polynomials.
Further, we have proven that the nuh-Toda lattice is another time evolution
equation of the BBS with many kinds of balls and finite carrier capacity.

Several problems are left for future works.
In subsequent papers, we are going to discuss the following topics.

Firstly, after imposing $(M, 1)$-reduction condition to biorthogonal polynomials,
we have only discussed the time evolution for $t_2$-direction.
However, there is another time variable $t_1$ and
exists another nonautonomous version of the discrete hungry Toda lattice.
We will be able to ultradiscretize the system and consider
a corresponding BBS-like cellular automaton.
Investigating and analyzing this novel cellular automaton,
its solutions, and relations to the BBS discussed in this paper are interesting problems.

Secondly, it is known that the nonautonomous discrete Toda type systems
give good numerical algorithms~\cite{maeda2016gea}.
As mentioned in \sref{sec:m-1-reduction}, the ndh-Toda lattice~\eqref{eq:ndhtoda-sf}
is same as the eigenvalue algorithm for totally nonnegative Hessenberg matrices
proposed by Fukuda \etal~\cite[Algorithm~1]{fukuda2012eam}, and
we have discussed its asymptotic behaviour for a special case
by analyzing the solution~\eqref{eq:tau-expanded-v}.
Then, we have a natural question: Does the recurrence relation of
the ndh-Toda lattice for $t_1$-direction also gives a good numerical algorithm?
By using techniques similar to that used in this paper,
we will be able to construct a particular solution,
perform asymptotic analysis, and give an answer to this question.
Investigating relations between the algorithms of $t_1$-direction and $t_2$-direction
is also an interesting problem.

Thirdly, we will be able to impose other reduction conditions to
the nd-2D-Toda lattice. For example, as a generalization of
the qd algorithm, which is same as the (autonomous) discrete Toda lattice,
the multiple dqd algorithm is proposed by Yamamoto and Fukaya~\cite{yamamoto2009dqd}.
The multiple dqd algorithm is an eigenvalue algorithm for
matrices decomposed to the product of $M_1$ upper bidiagonal matrices
and $M_2$ lower bidiagonal matrices.
We expect that nonautonomous versions of the multiple dqd algorithm
are derived from the nd-2D-Toda lattice by imposing $(M_1, M_2)$-reduction,
i.e. $\mathcal B^{(k_1+M_1, k_2, t_1, t_2)}=\mathcal B^{(k_1, k_2+M_2, t_1, t_2)}$.
There must be many examples and applications not limited to it.

\Bibliography{99}
\bibitem{adler1997sop}
Adler M and van Moerbeke P 1997 String-orthogonal polynomials, string
  equations, and 2-{Toda} symmetries {\em Comm. Pure Appl. Math.\/} {\bf 50}
  241--290

\bibitem{aptekarev2016mtl}
Aptekarev A~I, Derevyagin M, Miki H and Van~Assche W 2016 Multidimensional
  {Toda} lattices: Continuous and discrete time {\em SIGMA\/} {\bf 12} 054

\bibitem{fukuda2012eam}
Fukuda A, Yamamoto Y, Iwasaki M, Ishiwata E and Nakamura Y 2012 Error analysis
  for matrix eigenvalue algorithm based on the discrete hungry {Toda} equation
  {\em Numer. Algorithms\/} {\bf 61} 243--260

\bibitem{gilson2015dau}
Gilson C~R, Nimmo J~J~C and Nagai A 2015 A direct approach to the ultradiscrete
  {KdV} equation with negative and non-integer site values {\em J. Phys. A:
  Math. Theor.\/} {\bf 48} 295201

\bibitem{hatayama2001taa}
Hatayama G, Hikami K, Inoue R, Kuniba A, Takagi T and Tokihiro T 2001 The
  {$A^{(1)}_M$} automata related to crystals of symmetric tensors {\em J. Math.
  Phys.\/} {\bf 42} 274--308

\bibitem{idzumi2009siv}
Idzumi M, Iwao S, Mada J and Tokihiro T 2009 Solution to the initial value
  problem of the ultradiscrete periodic {Toda} equation {\em J. Phys. A: Math.
  Theor.\/} {\bf 42} 315209

\bibitem{inoue2012isb}
Inoue R, Kuniba A and Takagi T 2012 Integrable structure of box--ball systems:
  crystal, {Bethe ansatz}, ultradiscretization and tropical geometry {\em J.
  Phys. A: Math. Theor.\/} {\bf 45} 073001

\bibitem{iserles1988tbp}
Iserles A and N\o{}rsett S~P 1988 On the theory of biorthogonal polynomials
  {\em Trans. Amer. Math. Soc.\/} {\bf 306} 455--474

\bibitem{kharchev1997frt}
Kharchev S, Mironov A and Zhedanov A 1997 Faces of relativistic {Toda} chain
  {\em Int. J. Mod. Phys. A\/} {\bf 12} 2675--2724

\bibitem{maeda2012ftr}
Maeda K 2012 A finite {Toda} representation of the box--ball system with box
  capacity {\em J. Phys. A: Math. Theor.\/} {\bf 45} 085204

\bibitem{maeda2010bbs}
Maeda K and Tsujimoto S 2010 Box-ball systems related to the nonautonomous
  ultradiscrete {Toda} equation on the finite lattice {\em JSIAM Lett.\/} {\bf
  2} 95--98

\bibitem{maeda2013dcr}
Maeda K and Tsujimoto S 2013 Direct connection between the
  {$\text{R}_{\text{II}}$} chain and the nonautonomous discrete modified {KdV}
  lattice {\em SIGMA\/} {\bf 9} 073

\bibitem{maeda2016gea}
Maeda K and Tsujimoto S 2016 A generalized eigenvalue algorithm for tridiagonal
  matrix pencils based on a nonautonomous discrete integrable system {\em J.
  Comput. Appl. Math.\/} {\bf 300} 134--154

\bibitem{miki2012dst}
Miki H, Goda H and Tsujimoto S 2012 Discrete spectral transformations of skew
  orthogonal polynomials and associated discrete integrable systems {\em
  SIGMA\/} {\bf 8} 008

\bibitem{nagai1999sca}
Nagai A, Takahashi D and Tokihiro T 1999 Soliton cellular automaton, {Toda}
  molecule equation and sorting algorithm {\em Phys. Lett. A\/} {\bf 255}
  265--271

\bibitem{spiridonov1995ddt}
Spiridonov V and Zhedanov A 1995 Discrete {Darboux} transformations, the
  discrete-time {Toda} lattice, and the {Askey-Wilson} polynomials {\em Methods
  Appl. Anal.\/} {\bf 2} 369--398

\bibitem{spiridonov1997dtv}
Spiridonov V and Zhedanov A 1997 Discrete-time {Volterra} chain and classical
  orthogonal polynomials {\em J. Phys. A: Math. Gen.\/} {\bf 30} 8727--8737

\bibitem{spiridonov2000stc}
Spiridonov V and Zhedanov A 2000 Spectral transformation chains and some new
  biorthogonal rational functions {\em Comm. Math. Phys.\/} {\bf 210} 49--83

\bibitem{spiridonov2007idt}
Spiridonov V~P, Tsujimoto S and Zhedanov A~S 2007 Integrable discrete time
  chains for the {Frobenius-Stickelberger-Thiele} polynomials {\em Comm. Math.
  Phys.\/} {\bf 272} 139--165

\bibitem{takahashi1990sca}
Takahashi D and Satsuma J 1990 A soliton cellular automaton {\em J. Phys. Soc.
  Japan\/} {\bf 59} 3514--3519

\bibitem{tokihiro1999pos}
Tokihiro T, Nagai A and Satsuma J 1999 Proof of solitonical nature of box and
  ball systems by means of inverse ultra-discretization {\em Inverse
  Problems\/} {\bf 15} 1639--1662

\bibitem{tokihiro1996fse}
Tokihiro T, Takahashi D, Matsukidaira J and Satsuma J 1996 From soliton
  equations to integrable cellular automata through a limiting procedure {\em
  Phys. Rev. Lett.\/} {\bf 76} 3247--3250

\bibitem{tsujimoto2010dso}
Tsujimoto S 2010 Determinant solutions of the nonautonomous discrete {Toda}
  equation associated with the deautonomized discrete {KP} hierarchy {\em J.
  Syst. Sci. Complex.\/} {\bf 23} 153--176

\bibitem{tsujimoto1998uke}
Tsujimoto S and Hirota R 1998 Ultradiscrete {KdV} equation {\em J. Phys. Soc.
  Japan\/} {\bf 67} 1809--1810

\bibitem{tsujimoto2000msde}
Tsujimoto S and Kondo K 2000 Molecule solutions of discrete equations and
  orthogonal polynomials {\em RIMS K\^oKy\^uroku\/} {\bf 1170} 1--8 in Japanese

\bibitem{yamamoto2009dqd}
Yamamoto Y and Fukaya T 2009 Differential qd algorithm for totally nonnegative
  band matrices: convergence properties and error analysis {\em JSIAM
  Letters\/} {\bf 1} 56--59
\endbib

\end{document}